\def\BibTeX{{\rm B\kern-.05em{\sc i\kern-.025em b}\kern-.08em
    T\kern-.1667em\lower.7ex\hbox{E}\kern-.125emX}}
\newcommand{\varfont}[1]{{{\scriptstyle\mathsf{#1}}}}
\newcommand{\var}[1]{{{\varfont{#1}}}}
\newcommand{\scr}{{\varfont{SCR}}}
\newcommand{\shr}{{\varfont{SHR}}}
\newcommand{\jkd}{{\varfont{JKD}}}
\newcommand{\mgp}{{\varfont{MGP}}}
\newcommand{\wox}{{\varfont{WOX5}}}
\newcommand{\clex}{{\varfont{CLEX}}}
\newcommand{\plt}{{\varfont{PLT}}}
\newcommand{\arf}{{\varfont{ARF}}}
\newcommand{\auxiaa}{{\varfont{AUXIAA}}}
\newcommand{\auxin}{{\varfont{AUXIN}}}
\newcommandx{\todos}[2][1=]{\todo[inline,caption={},linecolor=red,backgroundcolor=red!25,bordercolor=red,#1]{\textbf{TODO: }#2}}
\newcommandx{\commentGeorgios}[2][1=]{\todo[inline,caption={},linecolor=red,backgroundcolor=red!25,bordercolor=red,#1]{\small#2\textbf{\\-- Georgios}}\xspace}
\newcommandx{\commentAlberto}[2][1=]{\todo[inline,caption={},linecolor=cyan,backgroundcolor=cyan!25,bordercolor=cyan,#1]{\small#2\textbf{\\-- Alberto}}\xspace}
\newcommandx{\commentAndrea}[2][1=]{\todo[inline,caption={},linecolor=purple,backgroundcolor=purple!25,bordercolor=purple,#1]{\small#2\textbf{\\-- Andrea}}\xspace}
\newcommand{\BB}{\mathbb{B}}
\newcommand{\RE}{\mathbb{R}}
\newcommand{\hX}{\hat{X}}
\newcommand{\hs}{\hat{s}}
\newcommand{\op}{\oplus}
\newcommand{\calH}{\mathcal{H}}
\newcommand{\calX}{\mathcal{X}}
\newcommand{\calO}{\mathcal{O}}
\newcommand{\MM}{\mathbb{M}}
\newcommand{\dt}{\tau}
\newcommand{\disc}{discrete-time}
\newtheorem{theorem}{Theorem}
\newtheorem{remark}{Remark}
\newtheorem{corollary}{Corollary}
\newtheorem{definition}{Definition}
\newtheorem{examplenum}{Example}
\begin{document}

\title{Minimization of Dynamical Systems over Monoids}

\author{
	\IEEEauthorblockN{
		 Georgios Argyris\IEEEauthorrefmark{1},
		 Alberto Lluch Lafuente\IEEEauthorrefmark{1},
		 Alexander Leguizamon Robayo\IEEEauthorrefmark{2},
		 Mirco Tribastone\IEEEauthorrefmark{3},\\
		 Max Tschaikowski\IEEEauthorrefmark{2}, and
		 Andrea Vandin\IEEEauthorrefmark{4},\IEEEauthorrefmark{1}
	 \IEEEauthorblockA{\IEEEauthorrefmark{1}DTU Compute, Technical University of Denmark, Denmark}
	 \IEEEauthorblockA{\IEEEauthorrefmark{2}Department of Computer Science, Aalborg University, Denmark}
	 \IEEEauthorblockA{\IEEEauthorrefmark{3}SysMA Research Unit, IMT Lucca, Italy}
	 \IEEEauthorblockA{\IEEEauthorrefmark{3}Institute of Economics and EMbeDS, Sant'Anna School for Advanced Studies, Pisa, Italy}
}
}

\maketitle

\begin{abstract}
Quantitative notions of bisimulation are well-known tools for the minimization of dynamical models such as Markov chains and ordinary differential equations (ODEs). In \emph{forward bisimulations}, each state in the quotient model represents an equivalence class and the dynamical evolution gives the overall sum of its members in the original model. Here we introduce generalized forward bisimulation (GFB) for dynamical systems over commutative monoids and develop a partition refinement algorithm to compute the coarsest one. When the monoid is $(\mathbb{R}, +)$, we recover 
probabilistic bisimulation for Markov chains and more recent forward bisimulations for 
nonlinear ODEs. 
Using $(\mathbb{R}, \cdot)$ we get 
nonlinear reductions for discrete-time dynamical systems and ODEs 
where each variable in the quotient model represents the product of original variables in the equivalence class. When the domain is a finite set such as the Booleans $\mathbb{B}$, we can apply GFB to Boolean networks (BN), a widely used dynamical model in computational biology. Using a prototype implementation of our minimization algorithm for GFB, we find disjunction- and conjunction-preserving reductions on 60 BN from two well-known  repositories, and demonstrate the obtained analysis speed-ups. We also provide the biological interpretation of the reduction obtained for two selected BN, and we show how GFB enables the analysis of a large one that could not be analyzed otherwise.
Using a randomized version of our algorithm we find product-preserving (therefore non-linear) reductions on 21 dynamical weighted networks from the literature that could not be handled by the exact algorithm.
\end{abstract}


	\section{Introduction}\label{sec:intro}

Bisimulation is a fundamental tool in computer science for abstraction and minimization, relating models by useful logical and dynamical properties~\cite{sangiorgi:book}. Originally developed for concurrent processes in a non-quantitative setting~\cite{DBLP:conf/tcs/Park81}, it has been extended to quantitative models 
such as, e.g., the 
notion of probabilistic bisimulation~\cite{Larsen19911} closely related to ordinary lumpability for Markov chains~\cite{BuchholzOrdinaryExact}.
\emph{Forward} bisimulations  relate states based on criteria that depend on their \emph{outgoing} transitions (as opposed to \emph{backward} bisimulations that depend on \emph{incoming} ones, e.g.,~\cite{De-Nicola:1990aa}).
When applied to a dynamical system (DS), forward bisimulations preserve 
sums of values of state variables. E.g., probabilistic bisimulation yields a quotient Markov process where each state represents an equivalence class preserving the sum of the probabilities of its members; forward bisimulation for reaction networks identifies equivalence classes among the chemical species that preserve the total concentration~\cite{concur15,Bioinf21};
forward differential equivalence (FDE) for nonlinear ordinary differential equations (ODEs) relates 
 variables preserving sums of their solutions~\cite{DBLP:conf/popl/CardelliTTV16}.

An attractive feature of bisimulation is that one can compute the largest bisimulation 
 using partition refinement, based on the pioneering solution for concurrent processes~\cite{paige1987three}.
Partition refinement algorithms start from an \emph{initial partition} of variables which is iteratively refined (i.e., its blocks get split) until the  partition is a bisimulation. Notably,  initial partitions are useful 
to tune the reduction. For example, one can \emph{separate} groups of variables according to given criteria 
to prevent that variables from different groups will be aggregated together.
 %
 This makes bisimulation an effective approach for the minimization of complex
 DS, adding to 
 cross-disciplinary methods originated in e.g., chemical and performance engineering~\cite{okino1998,DBLP:conf/splc/Tribastone14}, control theory~\cite{antoulas}, and systems biology~\cite{Snowden:2017aa}.

Thus far, one can identify two common properties of existing forward bisimulations for DS: 
they preserve sums of state values, and the DS variables
take
real~$\mathbb{R}$ values.
There are, however, motivations to
generalize this setting. A forward bisimulation for ODEs can be seen as a special case of  \emph{linear lumping}~\cite{okino1998}, a minimization achieved by a 
linear projection of the state space 
by a matrix that encodes the 
partition of  state variables. However, one may be also interested in \emph{nonlinear lumpings} where each state in the reduced model represents a nonlinear transformation of original variables~\cite{LI1994343}.

Another motivating question tackled in this paper is the generalization of the domain on which the DS evolves. Forward bisimulation is not currently applicable to DS  that evolve over finite domains. Consider, e.g., the DS
\begin{equation}\label{eq:simple}
    \begin{split}
		x_1(k+1) & =  x_2(k) \vee x_3(k) \\
		x_2(k+1) & = x_1(k) \vee x_3(k)\\
		x_3(k+1) & = \neg x_3(k) \wedge (x_1(k) \vee x_2(k))
	\end{split}
\end{equation}
where variables $x_1$, $x_2$, and $x_3$ are defined over the Booleans $\BB=\{0,1\}$, and  $k$ denotes discrete time.
This is 
a Boolean network (BN), an established model of biological systems~\cite{KAUFFMAN1969437}.

Here we develop a more abstract notion of forward bisimulation, \emph{generalized forward bisimulation} (GFB), for a DS over a (commutative) monoid. We show that this is a conservative extension with respect to the literature because we recover available notions of forward bisimulation for DS when the monoid is $(\mathbb{R}, +)$. However, it is more general. For example, over the monoid $(\BB,\vee)$ one can prove that variables $x_1$ and $x_2$ in~\eqref{eq:simple} are \emph{GFB equivalent}, i.e., we can rewrite the model in terms of $x_1 \vee x_2$ and $x_3$. Indeed, by computing the disjunction of the left- and right-hand-side of $x_1$ and $x_2$ in~\eqref{eq:simple} we get
\begin{align*}
	x_1(k\!+\!1) \!\vee\! x_2(k\!+\!1) & =    x_2(k) \vee x_3(k) \vee x_1(k) \vee x_3(k)
	\\
 &=   x_3(k) \vee (x_1(k) \vee x_2(k)).
\end{align*}
By using the derived variable $x_{1,2}\equiv x_{1}\vee x_2$, we get the \emph{GFB-reduced model}
\begin{equation}\label{eq:simpleReduced}
	\begin{split}
		x_{1,2}(k+1) & = x_3(k) \vee x_{1,2}(k) \\
		x_3(k+1) & = \neg x_3(k) \wedge  x_{1,2}(k).
	\end{split}
\end{equation}

This can be used in place of the original model if one is not interested in the individual values of $x_1$ and $x_2$, but only in their disjunction.

Here  we show that GFB satisfies  desirable properties for bisimulations. 
\begin{enumerate}
	\item Over any commutative monoid $(\mathbb{M}, \op)$, GFB characterizes $\op$-preserving reductions, in the sense that any DS with fewer state variables which coincides with $\op$-operations of original state variables must necessarily be the quotient of a GFB. This generalizes characterization results for Markov chains~\cite{Larsen19911}, chemical reaction networks~\cite{PNAScttv}, and nonlinear ODEs~\cite{DBLP:conf/popl/CardelliTTV16}.
	Notably, our characterization result also covers the asymptotic dynamics, often of interest when analyzing DS (see, e.g.,~\cite{hopfensitz2013attractors}). We show that GFB preserves all \emph{attractors}, i.e., the 
	states towards which the DS tends to evolve and remain.
	
	\item GFB can be computed by a partition refinement algorithm. We develop a \emph{template} algorithm which hinges on the computation of a formula whose decidability and complexity depend on the domain and the right-hand sides of the DS under study. In general, this can be undecidable. However, when the monoid is $(\mathbb{R}, +)$ our algorithm reduces to that for FDE for nonlinear ODEs~\cite{DBLP:conf/popl/CardelliTTV16,PNAScttv,DBLP:conf/qest/CardelliTTV18}. Instead, when the domain is $\mathbb{B}$, the problem corresponds to Boolean satisfiability.
	
	\item For polynomial ODEs and  monoid $(\RE,\cdot)$, we obtain, to the best of our knowledge,
	the first algorithm for nonlinear reduction in (randomized) polynomial time.
	
	\item GFB is effective in practice, both in terms of reduction power and of obtained analysis speed-ups.
\end{enumerate}

Previous results are essentially agnostic to whether the time evolution of the DS is continuous or discrete. More specifically, the criteria for probabilistic bisimulation~\cite{Larsen19911} are the same for both continuous-time and discrete-time Markov chains. Similarly, FDE equivalently applies to both a nonlinear ODE system in the form $\partial_t x = f(x)$ (where $\partial_t$ denotes time derivative) and to a discrete-time nonlinear DS in the form $x(k+1) = f(x(k))$. With GFB, instead, more care has to be taken because this verbatim correspondence does not hold any longer. For this reason, we first develop GFB for discrete-time DS. Then, we consider continuous time by studying GFB for DS over the reals relating to, and extending, results for ODEs.

\emph{Applications.} Using a prototype implementation in the tool ERODE~\cite{cardelli2017erode}, 
we apply GFB  to case studies from different domains.  We consider Boolean and multi-valued 
networks~\cite{KAUFFMAN1969437,thomas1995dynamical}, where the latter allows for finer degrees of activation than just 0/1 as in~\eqref{eq:simple}. These models are known to suffer from state-space explosion, making model reduction appealing (see, e.g.,~\cite{argyris2021reducing}). 
Two selected case studies from the literature showcase the physical intelligibility of GFB reductions, and a third one shows how GFB can enable the analysis of otherwise intractable    BNs. In the three case studies, we show how  initial partitions can be devised using domain knowledge. 
For example, we show how  ($\BB$,$\land$) allows to identify and abstract away from distinct \emph{sub-models} (biological pathways);  we show how finite monoids and operations $\min$ and $\max$ allow studying  \emph{full model (de)activation}, meaning that we obtain reductions that 
track groups of components whose activation denote the (de)activation of different mechanisms of the model.
We also perform a large-scale validation of GFB on 60 Boolean and multi-valued 
networks from established repositories (GinSim~\cite{NALDI2009134}, BioModelsDB~\cite{BioModels2020}). We show that \emph{default} initial partitions can be synthesized automatically. We also show
that GFB is \emph{useful} due to its high reduction power and analysis speed-ups. 
%
We also consider real-valued DS. We study a case study of a higher-order Lotka-Volterra model~\cite{PredatorPreyHO}, and we perform a large-scale validation on 72 weighted networks from the Netzschleuder repository~\cite{Netzschleuder}. 



\section{Related work}\label{sec:related}

Most of the literature about model minimization can be found for DS over the reals. In this context, the 
framework of exact lumping considers 
reductions by means of both linear and nonlinear operators~\cite{Li19891413,tomlin1997effect}. The aforementioned notions of bisimulation for Markov chains and FDE, including its stochastic variants~\cite{DBLP:journals/bioinformatics/CardelliPTTVW21,DBLP:conf/birthday/CardelliTTV17}, can be seen as specific linear reductions that are induced by a partition of the state space. Indeed, this corresponds to a specific type of minimization known as \emph{proper lumping}, where each original variable is represented by only one variable in the reduced model~\cite{okino1998}. Since also GFB is developed in this style, it too can be seen as a special case of exact lumping. However, the coarsest GFB can be computed in randomized polynomial time when the dynamics is described by polynomials over the monoids $(\RE,+)$ or $(\RE,\cdot)$, see Section~\ref{sec:comp}. Instead, the computation of exact lumpings hinges, in the case of polynomial dynamics, on symbolic computations with worst-case exponential  complexity~\cite[Section 2.2]{LI1994343}.

Relying on polynomial invariants~\cite{DBLP:conf/tacas/GhorbalP14,DBLP:conf/atva/BartocciKS19}, $\mathcal{L}$-bisimulation~\cite{DBLP:journals/lmcs/Boreale19,DBLP:journals/scp/Boreale20} can be seen as a generalization of backward differential equivalence (BDE)~\cite{DBLP:journals/tcs/TschaikowskiT14a,DBLP:journals/tcs/TschaikowskiT14,DBLP:journals/tcs/CardelliTTV19a}, a backward-type bisimulation for non-linear ODEs, and is thus complementary to FDE (hence, GFB), as discussed in~\cite{DBLP:journals/lmcs/Boreale19,DBLP:journals/scp/Boreale20,BBLTTV21Lics}. It is also worth noting that neither BDE nor $\mathcal{L}$-bisimulation allow for model reduction through nonlinear transformations, in contrast to GFB. Similarly to $\mathcal{L}$-bisimulation, consistent abstraction (aka bisimulation)~\cite{DBLP:journals/tac/PappasLS00,DBLP:journals/tac/PappasS02,bisimulation_lin_sys_Schaft} is complementary to GFB. Indeed, for a so-called observation function, the coarsest consistent abstraction gives rise to a minimal reduced DS which coincides with the original one up to the chosen observation function. Instead, computing the coarsest GFB corresponds to 
finding an observation function which induces a minimal consistent abstraction. Hence, GFB reduces across observation functions, while consistent abstraction reduces with respect to a given observation function. Moreover, in contrast to consistent abstraction, GFB considers the subclass of observation functions induced by equivalence relations. To the best of our knowledge, the computation of an  observation function yielding a minimal reduced model has been investigated for linear dynamics only~\cite{DBLP:journals/tac/PappasLS00}.

In the model checking community there has been a large amount of work on predicate abstraction techniques, including counterexample-guided abstraction refinement (CEGAR)~\cite{DBLP:conf/cav/ClarkeGJLV00} and abstraction invariant checking (see, e.g., the recent work in~\cite{DBLP:conf/cav/MoverCGIT21} for polynomial ODEs). 
Our approach consists in a model-to-model abstraction reduction, where the final result is a reduced model specified in the same specification language as the original one (i.e., systems of equations). The abstraction function is discovered automatically, through a partition refinement algorithm that, at each iteration, computes and checks an invariant of a specific form that corresponds to the GFB conditions (roughly, that the current abstraction candidate is a sound one). This could be seen as a sort of counterexample-guided abstraction refinement process. The class of abstraction functions considered, however, is not arbitrary: only monoidal operations are considered, which yields two main advantages: efficiency in the partition refinement algorithm, and meaning for the class of application domains considered (e.g. full/partial activation in biological models).

Reduction techniques exist  for BNs. Boolean backward equivalence (BBE) is a backward-type bisimulation~\cite{argyris2021reducing}, in line exact lumpability for Markov chains~\cite{BuchholzOrdinaryExact} and BDE. Hence,  it can be shown that BBE and GFB (applied to BNs) are not comparable.
%
Other approaches for BN reduction are based on
\emph{variable absorption} (e.g., \cite{naldi2011dynamically,veliz2011reduction}) where selected variables are  \emph{absorbed} by the update functions of their target variables 
by replacing all occurrences of the absorbed variables with their update functions. These approaches are complementary to GFB because they do not compute exact reductions.

%
	%
	%
	%

\section{Preliminaries}\label{sec:prel}

In this section we formalize the notion of \disc{} DS and of  attractor for \disc\ DS~\cite{AttractorsDefs}, 
and notation considered in this paper. Then, we provide a running example used throughout the text.


\begin{definition}[Dynamical System]\label{def:bn}
	A discrete-time DS is a pair $D = (X, F)$ where $X\!=\!\{x_1,\ldots,x_n\}$ are variables and $F\!=\!\{f_{x_1},\ldots,f_{x_n}\}$ is a set of update functions, where $f_{x_i} : \MM^X \rightarrow \MM$ is the update function of variable $x_i$. Elements of $\MM^X\!$ are \emph{states}.
	The solution (simulation) of $D$ for initial state $s(0) \!\in\! \MM^X$ is given by the sequence $(s(k))_{k\geq0}$, where $s(k\!+\!1) \!=\! F(s(k))$ for all $k \!\geq \!0$. 
\end{definition}



%
We use $R$ to denote an equivalence relation over $X$, and $\calX_R$ the induced partition. We often do not distinguish among an equivalence relation and its induced partition. If not mentioned, we assume that $\op : \MM \times \MM \to \MM$ is such that $(\MM,\op)$ is a commutative monoid with neutral element $0_\op$. Moreover, $G^I$ denotes the set of all (total) functions from $I$ to $G$ and $f[a / b]$ is the term arising by replacing each occurrence of $a$ by $b$ in $f$.

As running example
we use a BN from~\cite{azpeitia2010single} that describes cell differentiation.
Deeper biological interpretation and its reduction will be given in Section~\ref{sec:applications}.

\begin{examplenum}\label{ex:1}
	Let $(X, F)$ be a discrete-time DS with Boolean variables
	$X = \{\scr, \shr, \jkd, \mgp, \wox, \clex, \plt, \arf,$ $\auxiaa, \auxin \}$
	and 
	 update function
	$F : \BB^X \!\to\! \BB^X$ with
	\begin{align*}
		%
		f_\scr & = \shr \land \scr \land (\jkd \lor \neg \mgp) \ \  &  f_\clex & = \shr \land \clex \\
		f_\shr & = \shr \qquad & f_\plt & = \arf \\
		f_\jkd & = \shr \land \scr  & f_\arf & = \neg \auxiaa \\
		f_\mgp & = \shr \land \scr \land \neg \wox & f_\auxiaa & = \neg \auxin \\
		f_\wox & = \arf \land \shr \land \scr \land \neg \clex & f_\auxin & = \auxin
	\end{align*}
	Monoids for the DS are $(\BB,\op)$, $\op \!\in\! \{\land, \!\lor, \!\mathit{XOR} \}$, with neutral elements $1,0,0$. 
\end{examplenum}



\begin{definition}[Attractor]\label{def:attractor}
	Let $D\!=\!(X,F)$ be a discrete-time DS. A non-empty set $A \subseteq \MM^X$ is called attractor of $D$ (wrt  some given topology of $\MM^X$) whenever
	\begin{itemize}
		\item $A$ is invariant under $F$, that is, $F(A) \subseteq A$;
		\item there is an open neighborhood $B$ of $A$ s.t.  for any $v \in B$ there exists a $\nu \geq 1$ such that $F^n(v) \in A$ for all $n \geq \nu$. $B$ is called a \emph{basin of attraction} of $A$.
	\end{itemize}
\end{definition}


\begin{examplenum}\label{ex:transition}
	Let $s = (0,0,0,0,0,1,1,1,0,1) \in \BB^{X}$ denote a state of the DS from Example~\ref{ex:1} where only the variables $\clex,\plt,\arf,\auxin$ are active. 
	By applying the update functions we get  $F(s)=s'=(0,0,0,0,0,0,1,1,0,1) \in \BB^X$, where $\plt$, $\arf$ and $\auxin$ are active. 
	If we apply the update functions again, the system remains in the same state, i.e., $F(s')=s'$, meaning that $\{ s' \}$ is an attractor. 
\end{examplenum}


\section{Generalized Forward Bisimulation}\label{sec:GFB}

Here we define generalized forward bisimulation (GFB), the notion of GFB reduction, and
show that GFB reductions preserve the original model dynamics.


\begin{definition}[Generalized Forward Bisimulation]\label{defBE}
	Let $D=(X,F)$ be a discrete-time DS, $(\MM,\op)$ a commutative monoid and $\calX_R$ a partition of $X$. Then, $\calX_R$ is a GFB when the following formula holds true:
	\begin{multline*}
		\forall s,s' \in \MM^X .
		\operatorname*{\bigwedge} \limits_{C \in \calX_R } \Bigl( \bigoplus\limits_{x_i \in C } s_{x_i} \!=\! \bigoplus \limits_{x_i \in C } s'_{x_i} \Bigr)
		\\
		\ \Longrightarrow 
		\operatorname*{\bigwedge} \limits_{C \in \calX_R } \Bigl(  \bigoplus\limits_{x_i \in C }f_{x_i}(s)  \!=\! \bigoplus \limits_{x_i \in C } f_{x_i}(s') \Bigr) .
	\end{multline*}
	The homomorphism of $R$, denoted by $\psi_R : \MM^X \to \MM^{\calX_R}$, is given by
	\[
	\psi_R(s)_C = \bigoplus\limits_{x_i \in C} s_{x_i} , \quad \text{for all} \ C \in \calX_R .
	\]
\end{definition}

\begin{examplenum}\label{ex:2}
For $\op \!=\! \land$,	$\calX_R \! =\! \{ C, \{\plt\}, \{\arf\}, \{\auxiaa\},$ $ \{\auxin\} \}$ is a GFB for our running example, where $C \!=\! \{ \scr, \shr, \jkd, \mgp, \wox, \clex \}$. 
	This means that the 
	running example can be rewritten solely in terms of conjunctions over all variables in C, and the other individual variables. 
	To this end, we first note that for all $x_i \notin C$ we have that $f_{x_i}$ is independent of any $x_j \in C$.~\footnote{However, the original system is not trivially decoupled in variables in $C$ and variables not in $C$, because $\arf$ appears in the update function of $\wox$.}
	Moreover, 
	the update functions of $\wox$ and $\clex$ contain terms $\neg\clex$ and $\clex$, respectively, therefore the conjunction of their update functions (and of all variables in $C$) can be simply rewritten as $0$ since: 
	$\bigwedge\limits_{x_i \in C} f_{x_i}(s) = s_{\clex} \land \neg s_{\clex} \land ( \ldots ) = 0$.
\end{examplenum}






\begin{definition}[Reduced DS]\label{def:red}
	The reduction $D/R$ of a discrete-time DS $D = (X,F)$ for an equivalence  $R$,  is the DS $(\calX_R, F_R)$ with $F_R = (f_C)_{C \in \calX_R}$ such that
	\begin{align*}
		f_C = \operatorname*{\bigoplus} \limits_{x_i \in C} f_{x_i} [x_k / 0_\oplus : x_k \notin \hX] [x_{i_{C'}} / x_{C'} : C' \in \calX_R ] ,
	\end{align*}
	where $x_{i_C} \in C$ is a representative of $C \in \calX_R$ and $\hX = \{ x_{i_C} : C \in \calX_R \}$ is the set of all representatives.
\end{definition}



\begin{examplenum}\label{ex:4}
	We compute the reduced DS of our running example for the GFB $\calX_R$ from Example~\ref{ex:2}. We choose $\jkd$ as representative of $C$, while the  representative for the other (singleton) blocks is obvious.
	With this, we obtain
	\begin{align*}
		f_C & = \bigwedge\limits_{x_k \in C} f_{x_k} [x_k / 1 : x_k \notin \hX] [x_{i_{C'}} / x_{C'} : C' \in \calX_R ] \\
		& = 1 \land 1 \land \big(C \lor \neg 1\big) \land 1 \land 1 \land 1 \land 1
		\\ & \qquad
		\land 1 \land \neg 1 \land \{\arf\} \land 1 \land 1 \land \neg 1 \land 1 \land 1 = 0
	\end{align*}
	For all other blocks, instead, we obtain
	\begin{align*}
		f_{\{\plt\}} & \!=\! {\{\arf\}},   &
		f_{\{\arf\}} & \!=\! \neg {\{\auxiaa\}},  
		\\
		f_{\{\auxiaa\}} & \!=\! \neg {\{\auxin\}}, &
		f_{\{\auxin\}} & \!=\! {\{\auxin\}}
	\end{align*}
\end{examplenum}

\begin{remark}
	We note that, syntactically, the reduced DS depends on the choice of representatives. However, if $R$ is a GFB, then Theorem~\ref{thm:quant} guarantees that such choice does not affect the \emph{semantics} of the reduced DS. 
\end{remark}

We now show that $D$ and $D/R$ have \emph{same dynamics up to $\psi_R$} iff $R$ is a GFB. 


\begin{theorem}[GFB characterization via model dynamics]\label{thm:quant}
	Fix a DS $D=(X,F)$, a partition $\calX_R$ of $X$, $D/R = (\calX_R,F_R)$, and a commutative monoid $(\MM, \op)$.  
	Then, 
	$R$ is a GFB iff for any initial state $s_0 \in \MM^X$  the solutions of $D$ and $D/R$ for $s_0$ and $\hs_0 = \psi_R(s_0)$, respectively, are equal up to $\psi_R$. That is:
	\[
	\hs_k = \psi_R(s_k), \quad \text{for~} k \geq 0,
	\]
	where $s_{k+1} = F(s_k)$ and $\hs_{k+1} = F_R(\hs_k)$.
\end{theorem}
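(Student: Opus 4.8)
The plan is to reduce both the GFB condition and the desired dynamical equivalence to the single functional identity
\[
F_R \circ \psi_R = \psi_R \circ F
\]
on $\MM^X$, and then close the loop by induction on time. First I would observe that, by the definition of $\psi_R$, the formula in Definition~\ref{defBE} is \emph{exactly} the statement that $F$ descends through $\psi_R$: whenever $\psi_R(s) = \psi_R(s')$ one has $\psi_R(F(s)) = \psi_R(F(s'))$. Thus being a GFB guarantees that there is a well-defined induced map on the image of $\psi_R$; the genuine content of the theorem is that the syntactically-defined $F_R$ of Definition~\ref{def:red} realises this induced map.

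The crux is therefore a single lemma: for every $s \in \MM^X$,
\[
F_R(\psi_R(s)) = \psi_R(F(s)), \qquad \text{i.e.}\qquad f_C(\psi_R(s)) = \bigoplus_{x_i \in C} f_{x_i}(s)\ \text{ for all } C .
\]
To prove it I would introduce the auxiliary state $\tilde{s} \in \MM^X$ obtained by setting every non-representative variable to $0_\op$ and every representative $x_{i_{C'}}$ to $\psi_R(s)_{C'}$. By construction, evaluating the doubly-substituted $f_C$ at $\psi_R(s)$ equals $\bigoplus_{x_i \in C} f_{x_i}(\tilde{s})$. The key algebraic point, and where the neutral-element hypothesis is used, is that $\psi_R(\tilde{s}) = \psi_R(s)$: each block sum of $\tilde{s}$ collapses to the single representative entry $\psi_R(s)_{C'}$ because the remaining summands are $0_\op$. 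Assuming $R$ is a GFB, applying the defining implication to the pair $s,\tilde{s}$ (which agree under $\psi_R$) yields $\bigoplus_{x_i \in C} f_{x_i}(s) = \bigoplus_{x_i \in C} f_{x_i}(\tilde{s}) = f_C(\psi_R(s))$, which is the lemma.

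With the lemma in hand, the forward direction follows by a straightforward induction on $k$: the base case $\hs_0 = \psi_R(s_0)$ holds by hypothesis, and the inductive step chains $\hs_{k+1} = F_R(\hs_k) = F_R(\psi_R(s_k)) = \psi_R(F(s_k)) = \psi_R(s_{k+1})$. For the converse, I would instantiate the assumed equivalence at the time steps $k=0$ and $k=1$ for an arbitrary $s_0$: this gives $F_R(\psi_R(s_0)) = \psi_R(F(s_0))$ for every $s_0$, i.e.\ the functional identity holds on all of $\MM^X$. Then, given any $s,s'$ with $\psi_R(s) = \psi_R(s')$, one computes $\psi_R(F(s)) = F_R(\psi_R(s)) = F_R(\psi_R(s')) = \psi_R(F(s'))$, which is precisely the conclusion of the GFB implication.

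I expect the main obstacle to be purely bookkeeping in the lemma: carefully tracking how the two nested substitutions $[x_k / 0_\op : x_k \notin \hX]$ and $[x_{i_{C'}} / x_{C'}]$ of Definition~\ref{def:red} interact, and confirming that evaluating the resulting reduced function at $\psi_R(s)$ coincides with direct evaluation of the $f_{x_i}$ at the surrogate state $\tilde{s}$. Everything downstream is either the neutral-element collapse or a one-line induction. As a byproduct, since the lemma holds for \emph{any} choice of representatives whenever $R$ is a GFB, it also justifies the preceding remark that the semantics of $D/R$ is independent of that choice.
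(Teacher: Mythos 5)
Your proposal is correct and follows essentially the same route as the paper's proof: both reduce the theorem to the commutation identity $F_R \circ \psi_R = \psi_R \circ F$, obtain the forward direction by induction on $k$, and obtain the converse by instantiating the assumed equivalence at the first time step for an arbitrary initial state. The only difference is that you spell out, via the surrogate state $\tilde{s}$ and the neutral-element collapse, the two steps the paper leaves implicit (that GFB plus Definition~\ref{def:red} yields the commutation identity, and that the commutation identity yields the GFB implication), which is a welcome addition rather than a deviation.
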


\begin{proof}[Proof of Theorem~\ref{thm:quant}]
	Let $R$ be a GFB, pick $s_0 \in \MM^X$ and set $\hs_0 = \psi_R(s_0) \in \MM^{\calX_R}$. We next show that $\hs_k = \psi_R(s_k)$ by induction over $k \geq 0$. Since the base case $k = 0$ is true by construction, we can turn to the induction step. For $k \geq 0$, we obtain
	\[
	\hs_{k+1} = F_R(\hs_k) = F_R(\psi_R(s_k)) = \psi_R(F(s_k)) = \psi_R(s_{k+1}) ,
	\]
	where the second identity follows from the induction hypothesis, while the third identity follows from the definition of $F_R$ and the fact that $R$ is a GFB. Conversely, if $\hs_k = \psi_R(s_k)$ for all $k \geq 0$, we can conclude for $k = 0$ and arbitrary $s_0 \in \MM^X$ that
	\[
	\psi_R(F(s_0)) = \psi_R(s_1) = \hs_1 = F_R(\hs_0) = F_R(\psi_R(s_0)) ,
	\]
	thus showing that $R$ is a GFB.
\end{proof}

Theorem~\ref{thm:quant} readily implies the following result on attractors.

\begin{corollary}\label{cor:attractor}
	Let $D=(X,F)$ be a DS, $(\MM, \op)$ a commutative monoid, $R$ a GFB and $D/R = (\calX_R,F_R)$. 
	Then, we have the following two (equivalent) statements.
	\begin{itemize}
		\item If $A \subseteq \MM^X$ is an attractor of $D$, then $\psi_R(A) \subseteq \MM^{\calX_R}$ is an attractor of $D/R$.
		\item If $A \subseteq \MM^{\calX_R}$ is not an attractor of $D/R$, then 
		$\psi_R^{-1}(A) \subseteq \MM^{X}$ is not an attractor of $D$.
	\end{itemize}
\end{corollary}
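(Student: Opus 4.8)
The plan is to derive everything from Theorem~\ref{thm:quant}, which tells us that $\psi_R$ intertwines the two dynamics: $\psi_R(F(s)) = F_R(\psi_R(s))$ for every state $s$, or equivalently $\psi_R \circ F = F_R \circ \psi_R$. This single identity is the engine; the rest is checking that it transports the two defining properties of an attractor (invariance and the existence of a basin) across $\psi_R$. I would first observe that the two bulleted statements are contrapositives of each other once we note that $\psi_R^{-1}(A)$ being an attractor of $D$ would, by the first statement applied to $\psi_R^{-1}(A)$, force $\psi_R(\psi_R^{-1}(A)) = A$ (surjectivity of $\psi_R$ onto $\MM^{\calX_R}$) to be an attractor of $D/R$; so it suffices to prove the first statement. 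Here $\psi_R$ is surjective because for the monoid operation, picking a representative and the neutral element $0_\op$ on the other coordinates produces any desired block-sum, so $\psi_R(\psi_R^{-1}(A)) = A$ holds.

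For the first statement, suppose $A \subseteq \MM^X$ is an attractor of $D$ with basin $B$. I would check invariance of $\psi_R(A)$ first: take $\hat v \in F_R(\psi_R(A))$, so $\hat v = F_R(\psi_R(a))$ for some $a \in A$; by the intertwining identity $\hat v = \psi_R(F(a))$, and since $F(A) \subseteq A$ we get $F(a) \in A$, hence $\hat v \in \psi_R(A)$. Thus $F_R(\psi_R(A)) \subseteq \psi_R(A)$. For the basin, the natural candidate is $\hat B := \psi_R(B)$, or more safely the image together with an argument that $\psi_R$ is an open map (or a continuous surjection, depending on which topology on $\MM^{\calX_R}$ is fixed relative to the one on $\MM^X$). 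For any $\hat v \in \hat B$ choose $v \in B$ with $\psi_R(v) = \hat v$; since $v$ is in the basin there is $\nu$ with $F^n(v) \in A$ for all $n \geq \nu$, and applying $\psi_R$ together with the iterated intertwining identity $\psi_R \circ F^n = F_R^n \circ \psi_R$ gives $F_R^n(\hat v) = \psi_R(F^n(v)) \in \psi_R(A)$ for all $n \geq \nu$. This delivers the defining convergence condition for $\psi_R(A)$.

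The main obstacle is purely topological and concerns the open-neighborhood requirement in Definition~\ref{def:attractor}: I must produce an \emph{open} neighborhood of $\psi_R(A)$ contained in $\hat B$. When $\MM$ is finite (e.g.\ $\BB$) the topology is discrete, every set is open, and this issue evaporates — which is the case that matters for the Boolean-network applications. For $\MM = \RE$ one needs $\psi_R$ to be continuous and, ideally, open; since $\psi_R$ is a fixed linear-type map determined by the monoid operation (a coordinate-wise $\op$-fold), it is continuous, and one can either argue openness directly for the specific monoids used or take $\hat B$ to be the interior of $\psi_R(B)$ and verify it still contains $\psi_R(A)$ using that $A$ lies in the interior of $B$. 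I would flag this as the one step requiring the topological hypothesis already built into Definition~\ref{def:attractor} ("wrt some given topology of $\MM^X$"), and I expect the intended reading is that $\MM^{\calX_R}$ carries the topology making $\psi_R$ continuous and open, under which the argument closes immediately.
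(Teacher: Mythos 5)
Your proposal is correct and takes essentially the same approach as the paper: the paper gives no explicit proof, stating only that Theorem~\ref{thm:quant} ``readily implies'' the corollary, and your argument---using the intertwining identity $\psi_R \circ F^n = F_R^n \circ \psi_R$ to transport both invariance and the basin condition, and deducing the second bullet from the first via surjectivity of $\psi_R$---is exactly that intended argument. Your explicit handling of the topological subtlety (trivial in the discrete topology relevant to the Boolean/multi-valued applications, requiring openness or continuity of $\psi_R$ over $\RE$) is in fact more careful than the paper itself, which leaves the choice of topology implicit in Definition~\ref{def:attractor}.
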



\begin{examplenum}\label{example:reducedAttractors}
	We consider the attractor $s' = \{ (0,0,0,0,0,0,$ $1,1,0,1) \}$ from Example~\ref{ex:transition}. The homomorphism $\psi_R$ maps the attractor to $\psi_R(s')=\big\{ (0,1,1,0,1) \big\}$. Corollary~\ref{cor:attractor} ensures that the set $\psi_R(s')$ is an attractor of the reduced system $D/R$. Indeed, by applying the update functions $F_R$ to  $(0,1,1,0,1)$, the reduced system remains at the same state, and thus $\psi_R(s')$ is invariant under $F_R$.
\end{examplenum}


\section{Computation of the coarsest GFB}\label{sec:comp}
Computing the coarsest GFB that refines a given initial partition is based on the classic partition refinement algorithm~\cite{paige1987three} where the blocks of an initial partition are iteratively refined (or split) until a GFB is obtained. The coarsest GFB is obtained when the initial partition contains one block only. Different  initial partitions can be useful 
to tune  reductions to preserve variables of interest (see, e.g., Section~\ref{sec:applications}).
%
%
Here we prove that there exists a unique coarsest GFB that refines a given initial partition, and that the algorithm computes it.

\begin{theorem}\label{thm:maximal}
	Let $D=(X,F)$ be a discrete-time DS, and $\calX_R$ a partition of $X$. 
	There exists a unique coarsest GFB $\calH$ that refines $\calX_R$.
\end{theorem}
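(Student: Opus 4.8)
The plan is to argue order-theoretically. The partitions of the finite set $X$ form a finite lattice under refinement, and I want to exhibit a greatest element among those that are GFBs and refine $\calX_R$. Write $\mathcal{G}$ for the set of GFBs $\calH$ with $\calH \preceq \calX_R$ (finer than $\calX_R$), ordered by refinement, and let $\vee$ denote the join (least upper bound). First I would note that $\mathcal{G}$ is nonempty: the discrete partition into singletons is trivially a GFB, since its premise forces $s = s'$, and it refines every partition. Next I would restate the GFB condition in the form read off Definition~\ref{defBE}: $\calH$ is a GFB iff the state equivalence $s \sim_\calH s' :\Leftrightarrow \psi_\calH(s) = \psi_\calH(s')$ is invariant under $F$, that is $s \sim_\calH s' \Rightarrow F(s) \sim_\calH F(s')$. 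With this reformulation the whole theorem reduces to a single claim: $\mathcal{G}$ is closed under $\vee$. A finite, nonempty, join-closed subset of a finite lattice has a greatest element, namely the join of all its members, and a greatest element of a poset is unique; this greatest element is the sought coarsest GFB refining $\calX_R$.

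The core step is therefore: if $\calH_1, \calH_2 \in \mathcal{G}$ then $\calH := \calH_1 \vee \calH_2 \in \mathcal{G}$. That $\calH \preceq \calX_R$ is immediate, since $\calX_R$ is an upper bound of $\{\calH_1, \calH_2\}$ and $\calH$ is the least such. To see that $\calH$ is a GFB I would use a zig-zag argument. Because each $\calH_i$ is finer than $\calH$, every $\calH$-block is a union of $\calH_i$-blocks, whence $s \sim_{\calH_i} s' \Rightarrow s \sim_\calH s'$, i.e. $\sim_{\calH_i} \subseteq \sim_\calH$. Now suppose $s \sim_\calH s'$. If I can produce a finite chain $s = t_0, t_1, \ldots, t_m = s'$ in which each consecutive pair satisfies $t_j \sim_{\calH_1} t_{j+1}$ or $t_j \sim_{\calH_2} t_{j+1}$, then the $F$-invariance of the corresponding $\sim_{\calH_i}$ at each step gives $F(t_j) \sim_{\calH_i} F(t_{j+1})$, hence $F(t_j) \sim_\calH F(t_{j+1})$; transitivity of $\sim_\calH$ then yields $F(s) \sim_\calH F(s')$, which is exactly the GFB condition for $\calH$.

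The crux, and the step I expect to be the main obstacle, is thus the connectivity lemma: any two states with the same $\calH$-block sums are linked by such a chain, i.e. each fibre $\psi_\calH^{-1}(v)$ is a single class of the transitive closure $\sim_{\calH_1} \vee \sim_{\calH_2}$. I would prove this by a mass-gathering argument, working one $\calH$-block $C$ at a time, since moves inside one block do not disturb the others. Form the connected bipartite block-graph whose nodes are the $\calH_1$- and $\calH_2$-subblocks of $C$ with an edge for each shared variable, pick a spanning tree rooted at a subblock containing a fixed $x_0 \in C$, and process subblocks from the leaves inward: within a subblock $P$ a single $\sim_{\calH_i}$ move can concentrate the entire current sum $\bigoplus_{x \in P} t_x$ onto the variable $P$ shares with its parent and set the other coordinates to $0_\op$, which is legitimate because it preserves $P$'s sum and leaves every other block of $\calH_i$ fixed. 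Iterating rootward drives every state of the fibre to the same normal form, in which $t_{x_0} = v$ and $t_x = 0_\op$ elsewhere, and reversing one of the two reduction paths connects any $s, s'$ in the fibre. The delicate point is the bookkeeping that guarantees an already-emptied variable is never refilled by a later move, which I would handle by fixing a leaves-to-root processing order on the spanning tree and by relying on commutativity and associativity of $\op$ to shuffle monoid values freely within each block, taking care that variables shared among three or more subblocks are accounted for exactly once. Once the connectivity lemma is established, join-closure, and with it the existence and uniqueness of the coarsest GFB refining $\calX_R$, follow as above.
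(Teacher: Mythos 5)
Your proof is correct, and it takes a genuinely different route from the paper's. Both arguments rest on the same core fact---the coarsest GFB refining $\calX_R$ is the join (transitive closure of the union) of all GFBs refining it---but you establish closure under join \emph{semantically}, directly from Definition~\ref{defBE}: the zig-zag chain reduces $F$-invariance of $\sim_{\calH_1\vee\calH_2}$ to that of $\sim_{\calH_1}$ and $\sim_{\calH_2}$, and the mass-gathering normal form supplies the connectivity of each fibre of $\psi_{\calH_1\vee\calH_2}$. The paper instead fixes arbitrary GFBs $\sim_1,\dots,\sim_\nu$, passes to the transitive closure of their union in one step, and verifies the GFB property \emph{syntactically}, via identities between $\oplus$-sums of update functions under substitutions $[x_j/0_\op][x_{j+1}/(x_j\op x_{j+1})]$; this implicitly invokes both directions of the binary characterization (Theorem~\ref{thm:bin:fe}, proved only afterwards), whose proof contains precisely the representative-gathering construction you devise, but for a single partition rather than for a join of two---which is why your connectivity lemma (bipartite block graph, spanning tree, leaves-to-root processing) is the heavier piece. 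Your version is self-contained and never manipulates function identities; the paper's is terser because it reuses machinery needed anyway for Algorithm~\ref{algorithm}. Two details in your sketch tighten up nicely: the worry about variables shared by three or more subblocks is vacuous, since each variable lies in exactly one $\calH_1$-block and one $\calH_2$-block and hence contributes exactly one edge of the bipartite graph; and the no-refilling bookkeeping holds because a move on a subblock $Q$ writes a non-neutral value only to its carrier $x_Q$, whose other subblock is the parent of $Q$, which by your processing order is never processed before $Q$.
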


\begin{proof}[Proof of Theorem~\ref{thm:maximal}]
	Fix arbitrary GFBs $\sim_1, \ldots, \sim_\nu \subseteq R$ and let $\calH_1, \ldots,$ $\calH_\nu$ be the corresponding partitions, i.e., $\calH_i = X_{\sim_i}$. Moreover, let $\sim_\ast := \big( \bigcup_{i = 1}^m \sim_i \big)^*$ and $\calH^\ast := X_{\sim_\ast}$, where the asterisk denotes transitive closure of a relation. At last, let $x_{i_{H^\ast}} \in H^\ast$ denote some representative of $H^\ast \in \calH^\ast$. With this, pick an arbitrary $H^\ast \in \calH^\ast$. By construction of $\calH^\ast$, there exist $x_0, \ldots,x_k \in X$ and $i_0,\ldots,i_{k-1} \in \{1,\ldots,\nu\}$ so that $\{x_0,\ldots,x_k\} = H^\ast$, $x_k = x_{i_{H^\ast}}$ and
	$x_j \sim_{i_j} x_{j + 1}$ for all $0 \leq j \leq k - 1$. Moreover, for any $G^\ast \in \calH^\ast$ and $1 \leq i \leq \nu$, there exist (unique) $G^i_1,\ldots,G^i_{m_i} \in \calH_i$ such that $\biguplus_{l = 1}^{m_i} G^i_l = G^\ast$. Since $x_j \sim_{i_j} x_{j + 1}$ and $\calH_{i_j}$ is a GFB, we obtain
	\begin{align*}
		\operatorname*{\bigoplus}_{x_\iota \in G^\ast} f_{x_\iota} & = \operatorname*{\bigoplus}_{l = 1}^{m_{i_j}} \operatorname*{\bigoplus}_{x_\iota \in G^{i_j}_l} f_{x_\iota} \\
		& = \operatorname*{\bigoplus}_{l = 1}^{m_{i_j}} \operatorname*{\bigoplus}_{x_\iota \in G^{i_j}_l} f_{x_\iota}[x_j / 0_\op][x_{j+1} / (x_j \op x_{j+1})] \\
		& = \operatorname*{\bigoplus}_{x_\iota \in G^\ast} f_{x_\iota}[x_j / 0_\op][x_{j+1} / (x_j \op x_{j+1})]
	\end{align*}
	Since $\{x_0,x_1,\ldots,x_k\} = H^\ast$ and $x_k = x_{i_{H^\ast}}$, an application of the argument for all $0 \leq j \leq k - 1$ implies that $\operatorname*{\bigoplus}_{x_\iota \in G^\ast} f_{x_\iota}$ is equivalent to
	\begin{align*}
		\operatorname*{\bigoplus} \limits_{x_\iota \in G^\ast} f_{x_\iota} [x_k / 0_\oplus : x_k \in H^\ast, x_k \neq x_{i_{H^\ast}}] [x_{i_{H^\ast}} / \bigoplus \limits_{x_l \in H^\ast} x_l ]
	\end{align*}
	Since the choice of $G^\ast, H^\ast \in \calH^\ast$ was arbitrary, we infer that $\calH^\ast$ is a GFB.
\end{proof}

A partition refinement algorithm for computing GFB needs a condition to tell: (i) if the current partition is a GFB, and, if not, (ii) how to split its blocks towards getting a GFB. Definition~\ref{defBE} can only be used for Point (i).
Theorem~\ref{thm:bin:fe} below provides a binary, relation-driven, characterization of GFB which allows for Point (ii). 
The intuition is that, by applying such binary characterization pairwise to all variables in each block of the current partition, we get the sub-blocks in which they should be split in the next iteration.

\begin{theorem}[Binary Characterization of GFB]\label{thm:bin:fe}
	Let $D=(X,F)$ be a DS, $(\MM,\op)$ a commutative monoid, $R$ an equivalence relation on $X$, and $\calX_R$ the induced partition. Then, $\calX_R$ is a GFB if and only if for any $(x_i,x_j) \in R$ with $x_i \neq x_j$, the following formula holds (where $0_\op$ is the neutral element of $\op$):
	\[
	\Psi^{\calX_R}_{x_i,x_j} \equiv \operatorname*{\bigwedge}\limits_{C \in \calX_R } \Bigl( \bigoplus\limits_{x_k \in C }f_{x_k} = \bigoplus\limits_{x_k \in C } f_{x_k} [x_i / 0_\oplus] [x_j / (x_i \oplus x_j)]	\Bigr)
	\]
\end{theorem}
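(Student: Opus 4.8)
The plan is to recast the purely syntactic substitution appearing in $\Psi^{\calX_R}_{x_i,x_j}$ as a concrete transformation of states. Fix a pair $(x_i,x_j)\in R$ with $x_i\neq x_j$ and define the ``mass-transfer'' map $\sigma = \sigma_{x_i,x_j}:\MM^X\to\MM^X$ by setting the $x_i$-coordinate to $0_\op$, the $x_j$-coordinate to $s_{x_i}\op s_{x_j}$, and leaving all other coordinates unchanged. The first step is to check that, because substitutions are applied left to right and the replacement $x_i\op x_j$ reintroduces a fresh $x_i$ after $x_i$ has been zeroed, evaluating $f_{x_k}[x_i/0_\op][x_j/(x_i\op x_j)]$ at $s$ equals $f_{x_k}(\sigma(s))$. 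Hence $\Psi^{\calX_R}_{x_i,x_j}$ is precisely the assertion that $\bigoplus_{x_k\in C}f_{x_k}(s)=\bigoplus_{x_k\in C}f_{x_k}(\sigma(s))$ for every state $s$ and every block $C$. The crucial observation, using commutativity, associativity and the neutral element, is that since $x_i$ and $x_j$ lie in the same block, $\sigma$ preserves every block-sum, i.e. $\psi_R(\sigma(s))=\psi_R(s)$.

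For the forward direction, assume $\calX_R$ is a GFB. Given any $s$, put $s'=\sigma(s)$. By the invariance just noted, $s$ and $s'$ agree on all block-sums, so the GFB condition of Definition~\ref{defBE} yields $\bigoplus_{x_k\in C}f_{x_k}(s)=\bigoplus_{x_k\in C}f_{x_k}(s')=\bigoplus_{x_k\in C}f_{x_k}(\sigma(s))$ for every $C$. As $s$ was arbitrary, this is exactly $\Psi^{\calX_R}_{x_i,x_j}$, and the argument applies verbatim to each admissible pair.

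The backward direction is the substantive one and proceeds by a canonicalization argument. Assume all the formulas $\Psi^{\calX_R}_{x_i,x_j}$ hold. Fix a representative in each block and, for a state $s$, repeatedly apply the maps $\sigma_{y,r}$ that push the mass of each non-representative $y$ onto its block representative $r$; since distinct blocks occupy disjoint coordinates, this terminates in a canonical state $\bar s$ in which every representative carries the full block-sum and every other variable is $0_\op$. By construction $\bar s$ depends only on $\psi_R(s)$. Each elementary move along the chain $s\to\cdots\to\bar s$ is some $\sigma_{y,r}$ with $(y,r)\in R$ and $y\neq r$, so applying the corresponding $\Psi^{\calX_R}_{y,r}$ at the current intermediate state shows that every block-sum of the $F$-image is unchanged by that move; composing along the chain gives $\bigoplus_{x_k\in C}f_{x_k}(s)=\bigoplus_{x_k\in C}f_{x_k}(\bar s)$ for all $C$. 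Now if $s$ and $s'$ agree on all block-sums, then $\bar s=\bar{s}'$, whence their $F$-image block-sums coincide, which is exactly the GFB condition.

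The main obstacle is conceptual rather than computational: recognizing that the syntactic substitution encodes a block-sum-preserving mass transfer, and that these elementary transfers generate enough connectivity to link an arbitrary state to a canonical representative of its $\psi_R$-fiber. Once this is in place, both directions follow by applying the hypothesis one move at a time. The remaining care is purely mechanical — tracking the left-to-right substitution semantics and the monoid-law bookkeeping that guarantees invariance of the block-sums and correctness of the canonical form.
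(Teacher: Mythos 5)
Your proposal is correct and follows essentially the same route as the paper's own proof: both directions hinge on reading the left-to-right substitution $[x_i/0_\op][x_j/(x_i\op x_j)]$ as the block-sum-preserving mass-transfer map $s \mapsto s[x_i \mapsto 0_\op][x_j \mapsto s_{x_i}\op s_{x_j}]$, using it directly with Definition~\ref{defBE} for the forward implication, and, for the converse, chaining such transfers to drive any state to the canonical element of its $\psi_R$-fiber (all mass on representatives) so that two states with equal block-sums reach the same canonical state. The only cosmetic difference is that the paper canonicalizes along a sequence of pairwise moves $s_{i_1\to i_2}, s_{i_2\to i_3},\ldots$ while you push each non-representative directly onto the block representative; both rest on the same invariance argument and are equally valid.
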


\begin{proof}[Proof of Theorem~\ref{thm:bin:fe}]
	Let us assume first that $\calX_R$ is a GFB, pick an arbitrary $(x_i,x_j) \in R$ and pick the unique $C' \in \calX_R$ such that $x_i,x_j \in C'$. With this, define $s' := s[x_i \mapsto 0_\op][x_j \mapsto s_{x_i} \op s_{x_j}]$ for an arbitrary $s \in \MM^X$, where $s[x_k \mapsto b]_{x_k} = b$ and $s[x_k \mapsto b]_{x_l} = s_{x_l}$ for all $b \in \MM$ and $x_l \neq x_k$. Then, since $\op$ is commutative and associative and because  $\calX_R$ is a GFB, we have that
	\begin{align}\label{eq:thm:bin:fe:2}
		\operatorname*{\bigwedge} \limits_{C \in \calX_R } \Bigl( \bigoplus\limits_{x_i \in C }f_{x_i}(s)  = \bigoplus \limits_{x_i \in C } f_{x_i}(s') \Bigr).
	\end{align}
	Since the choice of $(x_i,x_j) \in R$ and $s \in \MM^X$ was arbitrary, we infer that $\Psi^{\calX_R}_{x_i,x_j}$ is valid. For the converse, let us assume that $\Psi^{\calX_R}_{x_i,x_j}$ holds true for all $(x_i,x_j) \in R$ and pick any two $s,s' \in \MM^X$ such that
	\begin{align}\label{eq:thm:bin:fe}
		\operatorname*{\bigwedge} \limits_{C \in \calX_R } \Bigl( \bigoplus\limits_{x_i \in C } s_{x_i} = \bigoplus \limits_{x_i \in C } s'_{x_i} \Bigr)
	\end{align}
	With this, pick for any $C \in \calX_R$ some arbitrary representative $x_{i_C} \in C$ and let $\hX = \{ x_{i_C} : C \in \calX_R \}$ be the set of all representatives. For any $(x_i,x_j) \in R$, define $s_{i \to j} := s[x_i \mapsto 0_\op, x_j \mapsto s_{x_i} \op s_{x_j}]$. With this, the fact that $\op$ is commutative and associative ensures the existence of a sequence $x_{i_1},x_{i_2},...,x_{i_k}$ for which $\hs = (((s_{i_1 \to i_2})_{ i_2 \to i_3}) \ldots)_{i_{k-1} \to i_k}$ is such that
	\begin{align*}
		\operatorname*{\bigwedge} \limits_{C \in \calX_R } \Bigl( \bigoplus\limits_{x_i \in C } s_{x_i} = \bigoplus \limits_{x_i \in C } \hs_{x_i} \Bigr) ,
	\end{align*}
	$\hs_{x_i} = 0_\op$ for all $x_i \notin \hX$ and $\hs_{x_{i_C}} = \bigoplus\limits_{x_i \in C } s_{x_i}$ for all $C \in \calX_R$. Since $\Psi^{\calX_R}_{x_{i_l},x_{i_{l+1}}}$ is valid for all $1 \leq l \leq k - 1$, we obtain
	\[
	\operatorname*{\bigwedge} \limits_{C \in \calX_R } \Bigl( \bigoplus\limits_{x_i \in C }f_{x_i}(s)  = \bigoplus \limits_{x_i \in C } f_{x_i}(\hs) \Bigr).
	\]
	A similar argument for $s'$ ensures that there is an $\hs'$ such that $\hs'_{x_i} = 0_\op$ for all $x_i \notin \hX$, $\hs'_{x_{i_C}} = \bigoplus\limits_{x_i \in C } s'_{x_i}$ for all $C \in \calX_R$ and
	\begin{align*}
		\operatorname*{\bigwedge} \limits_{C \in \calX_R } \Bigl( \bigoplus\limits_{x_i \in C } s'_{x_i} & = \bigoplus \limits_{x_i \in C } \hs'_{x_i} \Bigr) , \\
		\operatorname*{\bigwedge} \limits_{C \in \calX_R } \Bigl( \bigoplus\limits_{x_i \in C }f_{x_i}(s') & = \bigoplus \limits_{x_i \in C } f_{x_i}(\hs') \Bigr) .
	\end{align*}
	Thanks to~(\ref{eq:thm:bin:fe}), we infer that $\hs = \hs'$. This, in turn, implies the desired relation~(\ref{eq:thm:bin:fe:2}), thus showing that $\calX_R$ is a  GFB if and only if $\Psi^{\calX_R}_{x_i,x_j}$ is valid for all $(x_i,x_j) \in R$.
\end{proof}

The binary characterization tells us that we can rewrite an $\oplus$-expression of the update functions of a block of a GFB  in terms of $\oplus$-expressions of pairs of GFB equivalent variables $x_i$ and $x_j$. This can be done by successively moving, pair by pair, all variables of a GFB equivalence class to a chosen  representative.




\begin{examplenum}\label{ex:3}
	Let us consider the GFB $\calX_R$ from Example~\ref{ex:2}, the only non-singleton block $C \in \calX_R$, and  the variables $\shr,\jkd \in C$. With $\op = \land$ and $0_\land = 1$, we obtain
	\begin{align*}
		\bigwedge\limits_{x_k \in C} f_{x_k}  & = \shr \land \scr \land (\jkd \lor \neg \mgp)
		\\ & \qquad
		\land \shr \land \shr \land \scr \land \shr
		\land \scr \land \neg \wox \land \arf \land  \\
		& \qquad \shr\land \scr \land \neg \clex \land \shr \land \clex \\
		& = 0 \\
		& = 1 \land \scr \land ((\jkd \land \shr) \lor \neg \mgp)
		\\ & \qquad
		\land 1 \land 1 \land \scr \land 1
		\land \scr \land \neg \wox \land \arf  \\
		& \qquad \land 1 \land \scr \land \neg \clex \land 1 \land \clex
		\\ &
		= \bigwedge_{x_k \in C} f_{x_k}[\shr/1,\jkd/(\shr \land \jkd)]
	\end{align*}
	For any other block 
	the clause is trivially true because $\shr$ and $\jkd$ appear only in the update functions of variables in $C$. Hence, $\Psi^{\calX_R}_{\shr, \jkd}$ is valid.
	Similarly, we can show that $\Psi^{\calX_R}_{x_i, x_j}$ is valid for all $(x_i,x_j) \!\in\! R, x_i \!\neq\! x_j$. Hence $\calX_R$ is a GFB.
\end{examplenum}

\begin{algorithm}[tp!]
		\begin{algorithmic}[1]
			\WHILE{\textbf{true}}
			\STATE $\calH^\prime \leftarrow \emptyset$
			\FORALL{$H \in \calH$}
			\STATE $R \leftarrow \{(x_i,x_j) \in H \times H : \text{if } x_i \neq x_j, $  then $\Psi^\calH_{x_i,x_j} \text{ and } \Psi^\calH_{x_j,x_i} \}$
			
			\STATE $\calH' \leftarrow \calH' \cup (H/R)$
			\ENDFOR
			
			\IF{$\calH = \calH'$}
			\STATE \textbf{return} $\calH$
			\ELSE
			\STATE $\calH \leftarrow \calH'$
			\ENDIF
			\ENDWHILE
		\end{algorithmic}
	\caption{Compute the coarsest GFB that refines an initial partition $\calH$ for DS $(X,F)$.}
	\label{algorithm}
\end{algorithm}




The next result addresses the algorithmic computation of the coarsest GFB.

\begin{theorem}\label{thm:computesmaximal}
	Let $D=(X,F)$ be a discrete-time DS and $X_R$ a partition. Algorithm~\ref{algorithm} computes the coarsest GFB refining $R$ by deciding at most $\calO(|X|^3)$ instances of formula $\Psi^\calH_{x_i,x_j}$. If $\MM$ is finite, any formula $\Psi^\calH_{x_i,x_j}$ is decidable.
\end{theorem}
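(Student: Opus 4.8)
The plan is to prove four things in turn: that Algorithm~\ref{algorithm} terminates and returns a partition refining the input $R$; that the returned partition is a GFB; that it is the \emph{coarsest} such GFB; and finally the complexity and decidability claims. For termination and refinement, I would note that each block $H$ of the current partition $\calH$ is replaced by $H/R$, which refines $\{H\}$, so every iteration produces a $\calH'$ refining $\calH$ and the algorithm only ever splits blocks. Since $|X|$ is finite, the number of blocks is bounded by $|X|$ and strictly increases on each non-terminal iteration, so after at most $|X|$ iterations we reach the fixpoint $\calH = \calH'$ and return a partition refining the initial $R$.

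Next, correctness, i.e.\ that the output is a GFB. At the fixpoint no block is split, so for every block $H$ and every $R$-adjacent pair $x_i,x_j \in H$ both $\Psi^{\calH}_{x_i,x_j}$ and $\Psi^{\calH}_{x_j,x_i}$ hold, and $H$ is a single $R$-class. I would then move all variables of $H$ onto a chosen representative by composing the binary substitutions $[x_i / 0_\op][x_j / (x_i \op x_j)]$ along a connecting path inside $H$, exactly as in the chaining computation in the proofs of Theorem~\ref{thm:maximal} and Theorem~\ref{thm:bin:fe}; associativity and commutativity of $\op$ guarantee each block-sum $\bigoplus_{x_k \in C} f_{x_k}$ is invariant under the whole composition, which promotes the adjacent conditions to the validity of $\Psi^{\calH}_{x_i,x_j}$ for \emph{every} same-block pair. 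By Theorem~\ref{thm:bin:fe} the fixpoint partition is then a GFB.

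The heart of the argument is coarsest-ness. Let $\calH^\ast$ be the unique coarsest GFB refining $R$, which exists by Theorem~\ref{thm:maximal}. I would prove by induction on the iterations the invariant that $\calH^\ast$ refines the current $\calH$; the base case holds since $\calH^\ast$ refines the initial $R$. For the step, assume $\calH^\ast$ refines $\calH$ and fix $H^\ast \in \calH^\ast$, contained in some $H \in \calH$. For $x_i,x_j \in H^\ast$, Theorem~\ref{thm:bin:fe} gives $\Psi^{\calH^\ast}_{x_i,x_j}$ and $\Psi^{\calH^\ast}_{x_j,x_i}$. The key sub-lemma is that validity transfers \emph{downward} to the coarser reference partition: since $\calH^\ast$ refines $\calH$, each $C \in \calH$ is a disjoint union $C = \biguplus_l C_l$ of blocks $C_l \in \calH^\ast$, and summing the per-block identities $\bigoplus_{x_k \in C_l} f_{x_k} = \bigoplus_{x_k \in C_l} f_{x_k}[x_i/0_\op][x_j/(x_i \op x_j)]$ over $l$ yields the corresponding identity for $C$, i.e.\ $\Psi^{\calH}_{x_i,x_j}$ (and symmetrically $\Psi^{\calH}_{x_j,x_i}$). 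Hence $(x_i,x_j) \in R$, so $H^\ast$ is not split and $\calH^\ast$ refines $\calH'$. Applying the invariant at the fixpoint, $\calH^\ast$ refines the output; but the output is itself a GFB refining $R$, so by minimality of $\calH^\ast$ the output also refines $\calH^\ast$. Two-sided refinement forces equality, so the algorithm returns $\calH^\ast$.

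Finally, for complexity and decidability: each of the at most $|X|$ iterations checks, across all blocks, at most $\sum_{H} |H|^2 \le |X|^2$ ordered pairs, each costing two $\Psi$-evaluations, giving the bound of $\calO(|X|^3)$ decided instances of $\Psi^{\calH}_{x_i,x_j}$. When $\MM$ is finite, each $\Psi^{\calH}_{x_i,x_j}$ is a conjunction over the blocks $C \in \calH$ of equalities between two functions $\MM^X \to \MM$, i.e.\ a statement universally quantified over the finite set $\MM^X$, so it is decidable by evaluating both sides on all $|\MM|^{|X|}$ states. I expect the main obstacle to be making the coarsest-ness invariant fully rigorous — in particular the downward-transfer sub-lemma combined with the chaining needed to lift the pairwise (adjacent) binary conditions at the fixpoint to the full per-block GFB condition, since the relation $R$ built in the algorithm need not be transitive a priori.
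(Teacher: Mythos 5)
Your proposal is correct and follows essentially the same route as the paper's proof: existence of the coarsest GFB via Theorem~\ref{thm:maximal}, the induction invariant that it refines every iterate of the algorithm, the binary characterization of Theorem~\ref{thm:bin:fe} together with chaining along adjacent pairs at the fixpoint, two-sided refinement to force equality with the coarsest GFB, and the same $\calO(|X|^3)$ counting and finite-enumeration decidability argument. The two points you flag as obstacles---the a priori non-transitivity of the relation $R$ built in line~4, and the downward transfer of per-block identities from the finer partition $\calH^\ast$ to the coarser current one---are precisely what the paper's proof handles, respectively by showing $R_k$ equals its transitive closure (via the same canonical-state aggregation argument you sketch) and by the asserted refinement invariant, so your plan fills in the same details rather than taking a different route.
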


\begin{proof}[Proof of Theorem~\ref{thm:computesmaximal}]
	Pick the coarsest GFB $\calH_\ast$ that refines $X_R$ using Theorem~\ref{thm:maximal}. With this, set $\calH_0 := X_R$ and define for all $k \geq 0$ and $H \in \calH_k$
	\begin{align*}
		R_k(H) & := \{(x_i,x_j) \in H \times H : x_i \neq x_j \Rightarrow \Psi^{\calH_k}_{x_i,x_j} \land \Psi^{\calH_k}_{x_j,x_i} \}  \\
		\calH_{k+1} & := \bigcup_{H \in \calH_k} H / R^\ast_k(H) ,
	\end{align*}
	where $R^\ast_k(H)$ denotes the transitive closure of $R_k(H)$. By construction, $R_k(H)$ is reflexive and symmetric, thus implying $\operatorname*{\bigoplus} \limits_{x_i \in H} f_{x_i}(s) = \operatorname*{\bigoplus} \limits_{x_i \in H} f_{x_i}(\tilde{s})$ for all $s \in \MM^X$, $H \in \calH_k$, where
	\begin{align*}
		\tilde{s} = s[x_j \mapsto 0_\oplus : x_j \notin \hX_{k+1}][x_{i_{C'}} \mapsto \bigoplus_{x_j \in C'} s_{x_j} : C' \in \calH_{k+1}]
	\end{align*}
	and $x_{i_C} \in C$ is a representative of class $C \in \calH_{k+1}$, while $\hX_{k+1} = \{ x_{i_C} : C \in \calH_{k+1} \}$. (Note that $H \in \calH_k$, while $C \in \calH_{k+1}$ and $\hX_{k+1}$ is defined using $\calH_{k+1}$.) This implies that $R_k$ is transitive. Indeed, for any $(x_i,x_j),(x_j,x_k) \in R_k$ and $s' \in \MM^X$, the previous equation ensures for state $s := s'[x_i \mapsto 0_\op, x_k \mapsto s'_{x_i} \op s'_{x_k}]$ and any $H \in \calH_k$ that
	\begin{align*}
		\operatorname*{\bigoplus} \limits_{x_i \in H} f_{x_i}(s) = \operatorname*{\bigoplus} \limits_{x_i \in H} f_{x_i}(\tilde{s}') = \operatorname*{\bigoplus} \limits_{x_i \in H} f_{x_i}(\tilde{s}'')
		= \operatorname*{\bigoplus} \limits_{x_i \in H} f_{x_i}(\tilde{s}''') ,
	\end{align*}
	where
	\begin{align*}
		\tilde{s}' & \! = \! s[x_l \! \mapsto \! 0_\oplus : x_l \notin \hX_{k+1}][x_{i_{C'}} \! \mapsto \! \bigoplus_{x_j \in C'} s_{x_j} : C' \in \calH_{k+1}] , \\
		\tilde{s}'' & \! = \! s'[x_l \! \mapsto \! 0_\oplus : x_l \notin \hX_{k+1}][x_{i_{C'}} \! \mapsto \! \bigoplus_{x_j \in C'} s'_{x_j} : C' \in \calH_{k+1}] , \\
		\tilde{s}''' & \! = \! s'[x_i \! \mapsto \! 0_\op, x_j \! \mapsto \! 0_\op, x_k \! \mapsto \! s'_{x_i} \op s'_{x_j} \op s'_{x_k}] .
	\end{align*}
	Hence, $R_k^\ast = R_k$ and the expression $H/R$ is indeed well-defined in Algorithm~\ref{algorithm}. Further, a proof by induction over $k \geq 1$ shows that a) $\calH_\ast$ is a refinement of $\calH_k$ and b) $\calH_k$ is a refinement of $\calH_{k-1}$. Since $\calH_\ast$ is a refinement of any $\calH_k$, it holds that $\calH_\ast = \calH_k$ if $\calH_k$ is a GFB partition. Since $X$ is finite, b) allows us to fix the smallest $k \geq 1$ with $\calH_k = \calH_{k-1}$. This, in turn, implies that $\calH_{k-1}$ is a GFB. To see the complexity statement, we note that the algorithm can perform at most $|X|$ refinements, while each iteration compares $\calO(|X|^2)$ pairs. For the decidability, instead, we first note that the finiteness of $\MM$ ensures the finiteness of $\op \subseteq \MM \times \MM$ and any $f_{x_i}  \subseteq \MM^X \times \MM$. Hence, checking
	\[
	\operatorname*{\bigwedge}\limits_{C \in \calH} \Bigl( \bigoplus\limits_{x_k \in C }f_{x_k} = \bigoplus\limits_{x_k \in C} f_{x_k} [x_i / 0_\oplus] [x_j / (x_i \oplus x_j)]	\Bigr)
	\]
	amounts to a finite number of checks over finite sets and is thus decidable.
\end{proof}

The decidability of $\Psi^\calH_{x_i,x_j}$ for $\MM$ infinite is less immediate. Indeed, since deciding $\Psi^\calH_{x_i,x_j}$ amounts to deciding identities between functions, decidability 
over infinite domains critically hinges on the nature of the update functions. For instance, if $\MM = \RE$, the conditions of $\Psi^\calH_{x_i,x_j}$ require one to decide the equivalence of real-valued functions. If $\op = +$ and update function terms arise through addition and multiplication of variables and may contain minima and maxima expressions, the problem is double exponential~\cite{DBLP:conf/popl/CardelliTTV16}. If also exponential and trigonometric functions are allowed, the problem becomes undecidable~\cite{RealsFragmentUndecidable}.

We thus study the complexity of deciding $\Psi^\calH_{x_i,x_j}$ when $(f_{x_i})_{x_i \in X}$ are polynomials and $\op \in \{+,\cdot\}$. In such a case, checking $\Psi^\calH_{x_i,x_j}$ amounts to deciding whether the polynomials
\[
\bigoplus\limits_{x_k \in C} f_{x_k} \quad \text{and} \quad \bigoplus\limits_{x_k \in C} f_{x_k} [x_i / 0_\oplus] [x_j / (x_i \oplus x_j)]
\]
are equal. In case of the real and complex field, this question is equivalent to polynomial identity testing for which no holistic algorithms with polynomial time complexity are known~\cite{DBLP:journals/eatcs/Saxena09}.\footnote{The common holistic approach rewrites a polynomial into a sum of monomials. Hence, if $\op = \cdot$ and all $f_{x_k}$ have, say, 2 monomials, a direct computation of the monomials of $\op_{x_k \in C} f_{x_k}$ requires $\calO(2^{|C|})$ steps.}
Fortunately, the following result readily follows from the Schwartz-Zippel lemma~\cite{DBLP:journals/eatcs/Saxena09}.

\begin{theorem}\label{thm:comlex}
	Let $D=(X,F)$ be a discrete-time DS and $\calX_R$ a partition. Then, if $(f_{x_i})_{x_i \in X}$ are polynomials over some (sufficiently large) field $\MM$ and $\op \in \{+,\cdot\}$, Algorithm~\ref{algorithm} runs in randomized polynomial time. More specifically, assume that $\Psi^\calH_{x_i,x_j}$ is false and that it involves polynomials of degree less or equal $d$. Then, for any finite set $S \subseteq \MM$, any $C \in \calH$ and a uniformly sampled $v \in S^X$, we have 
	\[
	\mathbb{P}\Bigl\{ \bigoplus\limits_{x_k \in C}f_{x_k}(v) = \bigoplus\limits_{x_k \in C} f_{x_k}[x_i / 0_\oplus] [x_j / (x_i \oplus x_j)](v) \Bigr\} \leq \frac{d}{|S|} ,
	\]
	where $\mathbb{P}\{A\}$ denotes the probability of event $A$. In particular, one obtains a polynomial time randomized algorithm whenever $\MM$ has more than $d$ elements.
\end{theorem}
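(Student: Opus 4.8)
The plan is to recognize each conjunct of $\Psi^\calH_{x_i,x_j}$ as a polynomial identity test and to decide it by a single random evaluation, with the Schwartz-Zippel lemma supplying the error bound. First I would observe that for $\op \in \{+,\cdot\}$ both sides of every conjunct are polynomials in the $|X|$ indeterminates: the left-hand side $P_C := \bigoplus_{x_k \in C} f_{x_k}$ is a sum (resp.\ product) of polynomials, and the right-hand side $Q_C := \bigoplus_{x_k \in C} f_{x_k}[x_i/0_\oplus][x_j/(x_i\oplus x_j)]$ is obtained by substituting for $x_i,x_j$ the polynomials $0_\oplus$ and $x_i\oplus x_j$, so by closure of polynomials under composition it is again a polynomial. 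Hence the conjunct for block $C$ holds exactly when $P_C - Q_C$ is the zero polynomial.

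Next I would specify the test itself. Rather than expanding $P_C$ and $Q_C$ into monomials---which in the product case $\op = \cdot$ is exponential (cf.\ the footnote)---I would evaluate them at a uniformly sampled point $v \in S^X$. Concretely, $P_C(v)$ is computed by evaluating each $f_{x_k}(v)$ and combining the $|C|$ resulting values with $\op$, while $Q_C(v) = \bigoplus_{x_k \in C} f_{x_k}(v')$ is computed at the point $v'$ obtained from $v$ by resetting its $x_i$-coordinate to $0_\oplus$ and its $x_j$-coordinate to $v_{x_i}\oplus v_{x_j}$. Each of these is a single polynomial evaluation, so both sides are computed in time polynomial in the representation size of $F$, with no monomial expansion.

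For the probability bound I would invoke Schwartz-Zippel directly. If $\Psi^\calH_{x_i,x_j}$ is false for some $C$, then $g := P_C - Q_C$ is a nonzero polynomial; since $P_C$ and $Q_C$ have degree at most $d$ by hypothesis, so does $g$, and the lemma gives $\mathbb{P}\{ g(v) = 0\} \leq d/|S|$ for $v$ uniform on $S^X$, which is precisely the claimed inequality (and the test never errs when $\Psi^\calH_{x_i,x_j}$ holds, since then $g \equiv 0$). Choosing $S$ with $|S| > d$, which is possible whenever $\MM$ has more than $d$ elements, makes the one-sided error strictly below $1$; combining this per-check guarantee with the bound of $\calO(|X|^3)$ formula decisions from Theorem~\ref{thm:computesmaximal}, together with a union bound (or a constant number of independent repetitions for amplification), yields the randomized polynomial-time claim.

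The main obstacle I anticipate is not the probabilistic estimate, which becomes routine once the setup is in place, but rather cleanly justifying the polynomial-time evaluation in the multiplicative case and pinning down the one-sidedness of the error. The only way Algorithm~\ref{algorithm} can fail is a false positive that wrongly declares two inequivalent variables GFB-equivalent, thereby returning a partition strictly coarser than the true coarsest GFB; it therefore suffices to bound the probability of any such event across all checks, and the per-check estimate $d/|S|$ together with the union bound controls exactly this.
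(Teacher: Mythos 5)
Your proposal is correct and follows exactly the route the paper intends: the paper gives no explicit proof of this theorem, stating only that it ``readily follows from the Schwartz--Zippel lemma,'' and your argument---recognizing each conjunct as a polynomial identity test, evaluating at a random point to avoid monomial expansion, applying Schwartz--Zippel for the one-sided $d/|S|$ error bound, and combining with the $\calO(|X|^3)$ check bound from Theorem~\ref{thm:computesmaximal}---is precisely the fleshed-out version of that citation. The only minor imprecision is that with $|S|$ barely exceeding $d$ a \emph{constant} number of repetitions need not suffice for the union bound over all checks (one needs polynomially many repetitions or a larger $S$), but this does not affect the randomized polynomial-time conclusion.
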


\section{Continuous-time DS}\label{sec:cont}


We relate GFB to continuous-time DS, showing how GFB encapsulates existing  bisimulations for (nonlinear)  ODEs. Thus, in what follows we consider DS with domain $\mathbb{R}$. 
We can study minimizations for an ODE system  $\partial_t v(t) = \Phi(v(t))$ (where $\partial_t$ denotes time derivative) using GFB on its time discretization $(X,F)$, where $F(s) = s + \dt \Phi(s)$. 
Standard results imply that the approximation error between the ODEs and its time discretization vanishes if $\dt$ approaches zero~\cite{Gear:1971}.


\subsection{Exact lumpability}
GFB-type reductions can be captured by exact lumpability, an  established reduction notion for ODEs~\cite{Li19891413,tomlin1997effect}. Indeed, exact lumping must not be necessarily induced by a partition of the variables. However, we will show that when an exact lumping on an ODE system is described by the homomorphism $\psi_R$ of an equivalence relation $R$ if and only if it is a GFB for its discretization, provided higher-order discretization errors are ignored.
We start with the definition of exact lumping~\cite{Li19891413}.





\begin{definition}\label{def:el}
	Given an ODE system $\partial_t v(t) = \Phi(v(t))$ with a differentiable function $\Phi : \RE^X \to \RE^X$, a twice differentiable function $\psi : \RE^X \to \RE^{\hat{X}}$ is an exact lumping if $|\hat{X}| < |X|$ and there is a unique differentiable function $\hat{\Phi} : \RE^{\hat{X}} \to \RE^{\hat{X}}$ such that for any $v : [0;T] \to \RE^X$ satisfying $\partial_t v(t) = \Phi(v(t))$, it holds that $\partial_t \psi(v(t)) = \hat{\Phi}(\psi(v(t)))$ for all $t \in [0;T]$.
\end{definition}


Consider, e.g., the model
\begin{align*}
	{\partial_t}v_{x_1} &= v_{x_1}\\
	{\partial_t}v_{x_2} & = v_{x_2}.
\end{align*}
Then, $\psi(v_{x_1},v_{x_2}) = v_{x_1} v_{x_2}$ is an exact lumping since
\begin{align*}
\partial_t \psi(v) & = (\partial_{x_1} \psi (v), \partial_{x_2} \psi (v) ) \cdot \Phi(v)
\\ & =
(v_{x_2}, v_{x_1}) \cdot (\partial_t v_{x_1}, \partial_t v_{x_2} )^T
\\ & =
 2 v_{x_1} v_{x_2} = 2 \psi(v)\,
\end{align*}
where superscript $T$ denotes the transpose of a vector.
We can observe that this can be discovered using GFB on the time discretization of the ODE system, given by
\[f_{x_1}(s)  = s_{x_1} + \dt s_{x_1} \ \text{ and }\ f_{x_2}(s)  = s_{x_2} + \dt s_{x_2}.\]
%
Indeed $\calX_R = \{ \{x_1,x_2\} \}$ is a GFB over $(\mathbb{R}, \cdot)$ since
\begin{align*}
	f_{x_1} \!\cdot\! f_{x_2} &=
	(x_1 \!+\! \dt x_1) \cdot (x_2 \!+\! \dt x_2)
		\\&=
	x_1 x_2 \!+\! 2 \dt x_1 x_2 \!+\! \dt^2 x_1 x_2
	\\&=
	(f_{x_1} \!\cdot\! f_{x_2})[x_2 / 1, x_1 / x_1 x_2].
\end{align*}
 This shows that $\psi_R$ is indeed an exact lumping. The next result formalizes this relationship.


\begin{theorem}\label{thm:cont}
	Given $\partial_t v(t) = \Phi(v(t))$ with a differentiable function $\Phi : \RE^X \to \RE^X$, consider the DS $D_{\dt} = (X,F)$ with $F(s) = s + \dt \Phi(s)$, where $\dt > 0$. Further, let us assume that $\op : \RE \times \RE \to \RE$ is twice differentiable and that $(\RE,\op)$ is a commutative monoid. Then, for any partition $\calX_R$ of $X$:
	\begin{enumerate}
		\item If $R$ is a GFB of all $D_{\dt}$, then $\psi_R$ is an exact lumpability of $\partial_t v(t) = \Phi(v(t))$.
		\item If $\psi_R$ is linear, then $R$ is a GFB of all $D_{\dt}$ if and only if $\psi_R$ is an exact lumping of $\partial_t v(t) = \Phi(v(t))$.
	\end{enumerate}
\end{theorem}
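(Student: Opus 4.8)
The plan is to route both directions through the first-order ($\dt \to 0$) behaviour of the time discretization. Throughout I write $D\psi_R(s)$ for the Jacobian of $\psi_R$ at $s$; by definition of $\psi_R$ its $(C,x_i)$ entry equals $\partial_{x_i}\bigl(\bigoplus_{x_j \in C} s_{x_j}\bigr)$ when $x_i \in C$ and $0$ otherwise, which exists and is continuous because $\op$ is twice differentiable (this also makes $\psi_R$ twice differentiable, as required by Definition~\ref{def:el}).

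For statement~(1) I would start from the GFB identity for $D_\dt$. Fix $s,s'$ with $\psi_R(s) = \psi_R(s')$. Since $R$ is a GFB of $D_\dt$ for \emph{every} $\dt > 0$, the two smooth maps $\dt \mapsto \psi_R(s + \dt\,\Phi(s))$ and $\dt \mapsto \psi_R(s' + \dt\,\Phi(s'))$ coincide on $(0,\infty)$, and they agree at $\dt = 0$ by hypothesis; differentiating at $\dt = 0$ and using $\tfrac{d}{d\dt}(s + \dt\,\Phi(s)) = \Phi(s)$ with the chain rule gives $D\psi_R(s)\,\Phi(s) = D\psi_R(s')\,\Phi(s')$. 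Hence the map $\Theta(s) := D\psi_R(s)\,\Phi(s)$ depends on $s$ only through $\psi_R(s)$. Since $\partial_t \psi_R(v(t)) = D\psi_R(v(t))\,\Phi(v(t)) = \Theta(v(t))$ along any solution $v$, this says precisely that $\partial_t \psi_R(v)$ is a function of $\psi_R(v)$, the defining property of an exact lumping.

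It remains to produce a differentiable $\hat{\Phi}$ realizing this factorization and to argue uniqueness, and here I would exploit the neutral element to build an affine section of $\psi_R$. Let $\gamma : \RE^{\calX_R} \to \RE^X$ place $y_C$ on the representative $x_{i_C}$ of each block and $0_\op$ on every other coordinate; since $0_\op$ is neutral, $\psi_R(\gamma(y))_C = y_C \op 0_\op \op \cdots \op 0_\op = y_C$, so $\psi_R \circ \gamma = \mathrm{id}$ and $\psi_R$ is surjective (this is the analogue of the $[x_k/0_\op]$ substitution in Definition~\ref{def:red}). Setting $\hat{\Phi} := \Theta \circ \gamma$ yields a differentiable map with $\hat{\Phi}(\psi_R(v)) = \Theta(\gamma(\psi_R(v))) = \Theta(v)$, the last step because $\Theta$ factors through $\psi_R$ and $\psi_R(\gamma(\psi_R(v))) = \psi_R(v)$; thus $\partial_t \psi_R(v) = \hat{\Phi}(\psi_R(v))$. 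Uniqueness follows from surjectivity of $\psi_R$, and $|\hX| < |X|$ holds whenever the partition is non-trivial.

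For statement~(2) the forward implication is exactly~(1), so only the converse needs linearity. If $\psi_R$ is linear, write $\psi_R(s) = Ms$, so $D\psi_R \equiv M$ is constant and $\psi_R(F(s)) = Ms + \dt\,M\Phi(s)$ splits exactly into its zeroth- and first-order parts in $\dt$. Exactness means $M\Phi(v) = \hat{\Phi}(Mv)$ for all $v$ (every point lies on a solution, so evaluate at $t=0$), whence $Ms = Ms'$ forces $M\Phi(s) = M\Phi(s')$; combining, $Ms = Ms'$ gives $\psi_R(F(s)) = Ms + \dt\,M\Phi(s) = Ms' + \dt\,M\Phi(s') = \psi_R(F(s'))$ for every $\dt > 0$, i.e.\ $R$ is a GFB of all $D_\dt$. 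I expect the main obstacle to be the nonlinear bookkeeping in~(1): the equality $\psi_R(s + \dt\,\Phi(s)) = \psi_R(s' + \dt\,\Phi(s'))$ carries higher-order-in-$\dt$ terms that do not in general reduce to the infinitesimal condition, so one must extract only the $\dt = 0$ derivative and then secure differentiability and uniqueness of $\hat{\Phi}$ separately (via the affine section above). These same higher-order terms explain why linearity is indispensable in the converse of~(2): for nonlinear $\psi_R$, exactness constrains only the first-order term, whereas GFB for all $\dt$ also constrains the $\dt^2,\dt^3,\dots$ contributions, so an exact lumping need not lift to a GFB.
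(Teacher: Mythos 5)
Your proof is correct, but it takes a genuinely different route from the paper's. The paper does not argue Theorem~\ref{thm:cont} directly: its proof is literally a pointer to Theorem~\ref{thm:cont:char}, which Taylor-expands $\psi_R(F(s))$ and re-uses the machinery of Theorem~\ref{thm:bin:fe} to characterize exact lumpability as validity of the formulas $\Psi^{\calX_R}_{x_i,x_j}$ \emph{up to} $\calO(\dt^2)$; statement (1) then follows because exact validity trivially implies validity up to $\calO(\dt^2)$, and statement (2) because for linear $\psi_R$ the map $\dt \mapsto \psi_R(F(s))$ is affine in $\dt$, so there are no higher-order terms and the relaxed condition coincides with GFB. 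You instead work straight from the definitions: for (1) you differentiate the GFB identity $\psi_R(s+\dt\,\Phi(s))=\psi_R(s'+\dt\,\Phi(s'))$ at $\dt=0$ to conclude that $\Theta(s)=D\psi_R(s)\,\Phi(s)$ factors through $\psi_R$, and then realize $\hat{\Phi}=\Theta\circ\gamma$ through the affine section $\gamma$ built from the neutral element; for (2) you verify the converse by direct computation. Your route is more elementary and self-contained, and it explicitly settles points the paper leaves implicit: the differentiability of $\hat{\Phi}$ (from twice differentiability of $\op$ and differentiability of $\Phi$), its uniqueness (via the surjectivity witnessed by $\psi_R\circ\gamma=\mathrm{id}$ together with existence of solutions through every initial state, which holds since $\Phi$ is continuous), and a transparent account of why linearity is indispensable in (2). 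What the paper's detour buys in exchange is the stronger Theorem~\ref{thm:cont:char} itself: your differentiation-at-zero argument extracts only first-order information and cannot by itself deliver that full characterization, since---as your closing remark correctly anticipates---exact lumpability constrains only the $\calO(\dt)$ part while GFB for all $\dt$ constrains every order in $\dt$. One pedantic caveat, shared by the paper's own statement: Definition~\ref{def:el} formally requires $|\calX_R|<|X|$, so (1) should be read for non-trivial partitions, as you note.
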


\begin{proof}[Proof of Theorem~\ref{thm:cont}]
	See proof of Theorem~\ref{thm:cont:char}.
\end{proof}


With the exception of the important case where $\psi_R$ is linear, Theorem~\ref{thm:cont} does not address whether GFB is also a necessary condition for exact lumpability. Indeed, it turns out that a characterization requires to relax formula $\Psi_{x_i,x_j}^{\calX_R}$ to, roughly speaking, ignore the terms of (higher) order $\dt^2, \dt^3, ...$ and so on.
This is exemplified next.

\begin{examplenum}
\label{example:log:counter}
Consider $\partial_t v(t) = \Phi(v(t))$ where $\Phi$ is given by
\begin{align*}
	\partial_t v_{x_1} & = v_{x_1} \log(v_{x_2}) & \text{and} & & 
    \partial_t v_{x_2} & = v_{x_2} \log(v_{x_1})
\end{align*}	
when both right-hand sides are defined, and zero otherwise. Then, for monoid $(\RE,\cdot)$ and $\calX_R = \{ \{x_1,x_2\} \}$, it holds that $\psi_R(v_{x_1},v_{x_2}) = v_{x_1} \cdot v_{x_2}$ is an exact lumping, while $\calX_R$ is not a GFB. In order to see this, we start by noting that
\begin{align*}
\partial_t \psi_R(v) & = v_{x_2} \partial_t v_{x_1} + v_{x_1} \partial_t v_{x_2} \\
& = v_{x_1} v_{x_2} (\log(v_{x_2}) + \log(v_{x_1})) \\
& = \psi_R(v) \log(\psi_R(v))
\end{align*}
At the same time, the ODE discretization of the model is
	\begin{align*}
		f_{x_1} & = x_1 + \dt x_1 \log(x_2) , &
		f_{x_2} & = x_2 + \dt x_2 \log(x_1) .
	\end{align*}
With this, we obtain
\begin{align*}
     f_{x_1} f_{x_2} & = (x_1 + \dt x_1 \log (x_2)) (x_2 + \dt x_2 \log (x_1)) \\
     & = \psi(v) + \tau \psi(v) \log(\psi(v)) + \tau^2 \psi(v) \log(x_1) \log(x_2)
\end{align*}
Since the higher-order term $\dt^2 x_1 x_2 \log(x_1) \log(x_2)$ cannot be expressed in terms of $x_1 x_2$, we conclude that $\calX_R$ is not a GFB.
\end{examplenum}

We now  characterize exact lumpings of the form $\psi_R$, accounting for Example~\ref{example:log:counter} and generalizing Theorem~\ref{thm:cont}. As anticipated, we 
 ignore higher-order terms $\calO(\dt^2)$ when checking $\Psi_{x_i,x_j}^{\calX_R}$, where $\calO$  is the \emph{big O} notation from numerical analysis.

\begin{theorem}\label{thm:cont:char} Given $\partial_t v(t) = \Phi(v(t))$ with a differentiable vector field $\Phi : \RE^X \to \RE^X$, consider the DS $D_{\dt} = (X,F)$ with $F(s) = s + \dt \Phi(s)$ where $\dt > 0$. Let us assume that $\op : \RE \times \RE \to \RE$ is twice differentiable and that $(\RE,\op)$ is a commutative monoid. Then, for any partition $\calX_R$ of $X$, function $\psi_R$ is an exact lumping iff for all $(x_i,x_j) \in R$ with $x_i \neq x_j$ formula $\Psi^{\calX_R}_{x_i,x_j}$ is valid up to $\calO(\dt^2)$, that is
\begin{equation}
\begin{split}
		\label{eq:psi:with:o}
		&\operatorname*{\bigwedge}\limits_{C \in \calX_R } \Bigl( \bigoplus\limits_{x_k \in C }f_{x_k} + \calO(\dt^2)  =
		\\ &\qquad
		\bigoplus\limits_{x_k \in C } f_{x_k} [x_i / 0_\oplus] [x_j / (x_i \oplus x_j)] + \calO(\dt^2) \Bigr) .
	\end{split}
\end{equation}

\end{theorem}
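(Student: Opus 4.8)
The plan is to Taylor-expand the discretized GFB formula in $\dt$ and match coefficients, thereby reducing the claimed equivalence to a first-order invariance condition that I can identify with the defining property of an exact lumping. First I would expand each block expression of the time discretization. Writing $f_{x_k}(s) = s_{x_k} + \dt\,\Phi_{x_k}(s)$ and using that $\op$ is twice differentiable (so the iterated monoid operation viewed as a function of the block coordinates is $C^2$), Taylor's theorem with second-order remainder gives, for every block $C \in \calX_R$,
\[
\bigoplus_{x_k\in C} f_{x_k}(s) \;=\; \psi_R(s)_C \;+\; \dt\,\Theta_C(s) \;+\; \calO(\dt^2), \qquad \Theta_C(s) := \sum_{x_k\in C}\partial_{x_k}\!\Bigl(\bigoplus_{x_l\in C}s_{x_l}\Bigr)\,\Phi_{x_k}(s) .
\]
Here $\Theta_C(s) = [D\psi_R(s)\,\Phi(s)]_C$ is exactly the $C$-component of $\partial_t \psi_R(v)$ evaluated at $v=s$, i.e.\ the time derivative of $\psi_R$ along the flow. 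The remainder is uniform on compacta because $\op$ is $C^2$ and $\Phi$ is differentiable, which is what legitimizes matching the $\dt^0$- and $\dt^1$-coefficients.

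Second, I would observe, exactly as in the proof of Theorem~\ref{thm:bin:fe}, that the substitution $[x_i/0_\oplus][x_j/(x_i\oplus x_j)]$ evaluates the block expression at the state $s_{i\to j} = s[x_i\mapsto 0_\op, x_j\mapsto s_{x_i}\op s_{x_j}]$, and that since $(x_i,x_j)\in R$ belong to a common block this elementary move preserves every block-$\oplus$-sum, so $\psi_R(s_{i\to j}) = \psi_R(s)$. Substituting the expansion into~\eqref{eq:psi:with:o}, the $\dt^0$-terms $\psi_R(s)_C$ cancel on both sides, and hence validity of~\eqref{eq:psi:with:o} for the pair $(x_i,x_j)$ is equivalent to equality of the $\dt^1$-coefficients, namely $\Theta_C(s) = \Theta_C(s_{i\to j})$ for all $s \in \RE^X$ and all $C \in \calX_R$. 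Thus~\eqref{eq:psi:with:o} holding for every $(x_i,x_j)\in R$ is equivalent to each $\Theta_C$ being invariant under all elementary moves $s\mapsto s_{i\to j}$ with $(x_i,x_j)\in R$.

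Third, I would connect move-invariance to exact lumpability. By Definition~\ref{def:el}, the map $\psi_R$ (which is twice differentiable because $\op$ is) is an exact lumping iff $\partial_t \psi_R(v) = \hat\Phi(\psi_R(v))$ along solutions, i.e.\ $\Theta(s) = \hat\Phi(\psi_R(s))$ for all $s$; the passage from ``along solutions'' to ``pointwise'' uses local existence of a trajectory through each state, which holds since a differentiable $\Phi$ is locally Lipschitz. This says precisely that each $\Theta_C$ is constant on the fibers of $\psi_R$. For the direction lumping $\Rightarrow$ move-invariance, constancy on fibers together with $\psi_R(s_{i\to j}) = \psi_R(s)$ immediately yields $\Theta_C(s) = \Theta_C(s_{i\to j})$. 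For the converse I would reuse the canonical-state construction of Theorem~\ref{thm:bin:fe}: from any $s$ a finite sequence of elementary moves pushes, within each block, all mass onto a fixed representative, producing the state $\hat s$ with $\hat s_{x_{i_C}} = \psi_R(s)_C$ and $\hat s_{x_l}=0_\op$ otherwise; move-invariance then gives $\Theta_C(s) = \Theta_C(\hat s)$, and since $\hat s$ depends only on $\psi_R(s)$ this forces $\Theta_C$ to factor through $\psi_R$. Setting $\hat\Phi_C := \Theta_C\circ\sigma$, with $\sigma$ the affine (hence smooth) section sending block-sums to the canonical state, supplies the required reduced vector field.

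The hard part will be this converse of the third step: upgrading pointwise move-invariance to a genuine exact lumping satisfying all regularity demands of Definition~\ref{def:el}. One must check that the elementary moves reach enough of each fiber, which the canonical-form reduction handles by mirroring Theorem~\ref{thm:bin:fe}, and that the induced $\hat\Phi$ is differentiable and unique: differentiability follows since $\Theta_C$ is assembled from first derivatives of the $C^2$ operation $\op$ and from the differentiable $\Phi$, and $\sigma$ is affine, while uniqueness holds on the image $\psi_R(\RE^X)$ (and I would note the standing assumption that $R$ merges at least two variables, so $|\hat X|<|X|$). I would also flag that the $\calO(\dt^2)$ relaxation is essential and cannot be dropped, as witnessed by Example~\ref{example:log:counter}, where the exact GFB condition fails precisely at order $\dt^2$ even though $\psi_R$ is an exact lumping.
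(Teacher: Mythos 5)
Your proposal is correct and follows essentially the same route as the paper's proof: Taylor expansion of the block sums to first order in $\dt$, matching of $\dt^1$-coefficients using the fact that the substitution $[x_i/0_\oplus][x_j/(x_i\oplus x_j)]$ preserves $\psi_R$, and the canonical-state construction of Theorem~\ref{thm:bin:fe} to pass between pairwise move-invariance and factoring of the first-order drift through $\psi_R$, with the chain rule closing the loop to Definition~\ref{def:el}. If anything, your write-up is more explicit than the paper's (which delegates the forward direction to~\cite{tomlin1997effect} and leaves the differentiability and uniqueness of $\hat{\Phi}$ implicit); the only blemish is your justification of pointwise evaluation via ``differentiable implies locally Lipschitz,'' which is false in general but harmless here, since only local existence of solutions (Peano, i.e.\ continuity of $\Phi$) is needed for that step.
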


\begin{proof}[Proof of Theorem~\ref{thm:cont:char}]
	To improve readability, we write $\psi$ instead of $\psi_R$ in the present proof. Since $\op$ is twice differentiable by assumption, so is $\psi = (\psi_H)_{H \in \calX_R}$. For any $H \in \calX_R$, Taylor's theorem thus ensures
	\begin{align*}
		\psi_H(F(s)) &= \psi_H(s + \dt \Phi(s)) \\
		& = \psi_H(s) + (\partial_s \psi_H)(s + \dt \Phi(s)) \cdot \dt \Phi(s) + \calO(\dt^2) \\
		& = \psi_H(s) + \dt \cdot (\partial_s \psi_H)(s + \dt \Phi(s)) \cdot \Phi(s) + \calO(\dt^2)
	\end{align*}
	
	We begin by assuming that $\psi$ is an exact lumping. Then, with $\partial_t v(t) = \Phi(v(t))$, by~\cite{tomlin1997effect} the derivative of $t \mapsto \psi_H(v(t))$ can be written as a function of $\big(\psi_H(v(t))\big)_{H \in \calX_R}$. Since $v(0) \in \RE^X$ can be chosen arbitrarily, there is thus a function $\wp_H$ such that $\wp_H(\psi(s)) = (\partial_s \psi_H)(s + \dt \Phi(s)) \cdot \Phi(s) + \calO(\dt)$ for all $s \in \RE^X$. Indeed, by~\cite{tomlin1997effect}, there exists an $\hat{f}$ such that $\partial_t \psi(v(t)) = \hat{f}(\psi(v(t))$. At the same time, the chain rule yields $\partial_t \psi_H(v(t)) = (\partial_s \psi_H)(v(t)) \cdot \Phi(v(t))$. Setting $v(0) = s + \tau \Phi(s)$ for an arbitrary $s$, we thus get
    \begin{align*}
    \hat{f}_H(\psi(s + \dt \Phi(s))) & = (\partial_s \psi_H)(v(0)) \cdot \Phi(v(0)) \\
    & = (\partial_s \psi_H)(s + \tau \Phi(s)) \cdot \Phi(s + \tau \Phi(s)) \\
    & = (\partial_s \psi_H)(s + \tau \Phi(s)) \cdot \Phi(s) + O(\tau)
    \end{align*}
    With this, we can set $\wp_H = \hat{f}_H$ and conclude for all $s \in \RE^X$
	\begin{align*}
		\psi_H(F(s)) & = \psi_H(s) + \dt \cdot \wp_H(\psi(s)) + \calO(\dt^2)
	\end{align*}
	Since $H \in \calH$ can be chosen arbitrarily, following the argumentation from the proof of Theorem~\ref{thm:bin:fe}, we infer that for all $(x_i,x_j) \in R$ with $x_i \neq x_j$ formula $\Psi^{\calX_R}_{x_i,x_j}$ is valid up to $\calO(\dt^2)$. For the converse, let us assume that for all $(x_i,x_j) \in R$ with $x_i \neq x_j$ formula $\Psi^{\calX_R}_{x_i,x_j}$ is valid up to $\calO(\dt^2)$. Then, Taylor's theorem yields as before
	\begin{align*}
		\psi_H(F(s)) & = \psi_H(s) + \dt \cdot (\partial_s \psi_H)(s + \dt \Phi(s)) \cdot \Phi(s) + \calO(\dt^2)
	\end{align*}
	With this and the validity of the aforementioned $\Psi^{\calX_R}_{x_i,x_j}$, the argumentation from the proof of Theorem~\ref{thm:bin:fe} ensures the existence of functions $(\wp_H)_{H \in \calX_R}$ over $\RE^{\calX_R}$ such that
	\begin{align*}
		\psi_H(F(s)) & = \psi_H(s) + \dt \cdot \wp_H(\psi(s)) + \calO(\dt^2)
	\end{align*}
	for all $H \in \calX_R$ and $s \in \RE^X$. Hence, with $\partial_t v(t) = \Phi(v(t))$ and $v(0) = s + \dt \Phi(s)$, the chain rule implies
    \begin{align*}
    \partial_t \psi_H(v(0)) & = (\partial_s \psi_H)(v(0)) \cdot \Phi(v(0)) \\
    & = (\partial_s \psi_H)(s + \dt \Phi(s)) \cdot \Phi(s + \dt \Phi(s)) \\
    & = (\partial_s \psi_H)(s + \dt \Phi(s)) \cdot \Phi(s) + O(\dt) ,
    \end{align*}
thus yielding $\dt \cdot \partial_t \psi_H(v(0)) = \dt \cdot \wp_H(\psi(v(0))) + O(\dt^2)$. With this, we obtain that $\psi$ is an exact lumping. This completes the proof of Theorem~\ref{thm:cont:char}. We next turn to the proofs of 1) and 2) of  Theorem~\ref{thm:cont}. For 1), we note that $\Psi^{\calX_R}_{x_i,x_j}$ is valid up to $\calO(\dt^2)$ for all $(x_i,x_j) \in R$ when $R$ is a GFB. Instead, for 2) we observe that for a linear $\psi_R$ there are no higher-order terms, i.e., $\calO(\dt^2) = 0$. This two observations, combined with the foregoing discussion, yield statements 1) and 2).
\end{proof}

Theorem~\ref{thm:cont} is related to geometric integration where it has been shown~\cite[Section IV.1]{Hairer06} that discrete-time approximations preserve invariants of continuous-time DS only when these are linear or quadratic, but not if they are cubic or of higher degree. In contrast, Theorem~\ref{thm:cont:char} provides a one-to-one correspondence between continuous- and discrete-time invariants by dropping the higher order terms. Additionally, Theorem~\ref{thm:cont} and~\ref{thm:cont:char} allow in contrast to~\cite{Hairer06} for the algorithmic computation of (nonlinear) invariants.

We end the subsection by noting that if $(f_{x_i})_{x_i \in X}$ are polynomials, then \eqref{eq:psi:with:o} can be checked algorithmically by representing  polynomials as sums of monomials and by dropping afterwards all monomials containing a term $\tau^\nu$ with $\nu \geq 2$. 


\subsection{Forward differential equivalence and Markov chains}
Using the results of this section we can relate GFB with analogous bisimulations for DS. We start by restating the notion of forward differential equivalence (FDE) from~\cite{DBLP:conf/popl/CardelliTTV16}.

\begin{definition}[FDE]\label{def:FDE}
	Let us consider an ODE system $\partial_t v(t) = \Phi(v(t))$ with a differentiable function $\Phi : \RE^X \to \RE^X$. A partition $\calX_R$ of $X$ is an FDE if $\psi_R$ in case of $\op = +$ is an exact lumpability.
\end{definition}

The next result follows from Theorem~\ref{thm:cont}, relating GFB and  FDE~\cite{DBLP:conf/popl/CardelliTTV16}. 

\begin{corollary}\label{cor:fde}
	Given $\partial_t v(t) = \Phi(v(t))$ with a differentiable vector field $\Phi : \RE^X \to \RE^X$, and the DS $D_{\dt} = (X,F)$ with $F(s) = s + \dt \Phi(s)$, where $\dt > 0$. Then, for $(\RE,+)$, we have that $R$ is a GFB of all $D_{\dt}$ iff $R$ is an FDE of $\partial_t v(t) = \Phi(v(t))$.
\end{corollary}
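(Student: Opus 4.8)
The plan is to derive Corollary~\ref{cor:fde} directly from part 2 of Theorem~\ref{thm:cont}, by checking that for the monoid $(\RE,+)$ all hypotheses of that theorem are met and, crucially, that the homomorphism $\psi_R$ is linear so that the \emph{if and only if} (rather than only the one-directional) statement applies.

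First I would verify that $(\RE,+)$ satisfies the standing assumptions of Theorem~\ref{thm:cont}. The operation $\op = +$ is the addition map $(a,b) \mapsto a+b$, which is infinitely (hence twice) differentiable, and $(\RE,+)$ is a commutative monoid with neutral element $0$. Thus Theorem~\ref{thm:cont} is applicable to $\calX_R$ with $\op = +$, and the associated discretizations $D_\dt = (X,F)$ with $F(s) = s + \dt \Phi(s)$.

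Next, the key observation is that for $\op = +$ the homomorphism $\psi_R$ is linear. Indeed, by Definition~\ref{defBE} we have $\psi_R(s)_C = \bigoplus_{x_i \in C} s_{x_i} = \sum_{x_i \in C} s_{x_i}$ for each $C \in \calX_R$, and each such component is a linear functional of $s \in \RE^X$; hence $\psi_R : \RE^X \to \RE^{\calX_R}$ is linear. This is precisely the hypothesis needed to invoke part 2 of Theorem~\ref{thm:cont}, which then yields that $R$ is a GFB of all $D_\dt$ if and only if $\psi_R$ is an exact lumping of $\partial_t v(t) = \Phi(v(t))$.

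Finally I would close the loop using the definition of FDE: by Definition~\ref{def:FDE}, $R$ is an FDE exactly when $\psi_R$ (taken with $\op = +$) is an exact lumping. Chaining the two equivalences gives that $R$ is a GFB of all $D_\dt$ iff $\psi_R$ is an exact lumping iff $R$ is an FDE, which is the claim. I do not anticipate a genuine obstacle here, since the statement is essentially a specialization of Theorem~\ref{thm:cont}; the only point requiring care is confirming the linearity of $\psi_R$, as it is this property that unlocks part 2 of Theorem~\ref{thm:cont} and thereby secures \emph{both} directions of the equivalence rather than just the necessity direction supplied by part 1.
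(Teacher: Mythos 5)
Your proposal is correct and follows essentially the same route as the paper, whose entire proof is ``Set $\op = +$ in Theorem~\ref{thm:cont}.'' You have simply made explicit the details the paper leaves implicit --- in particular that $\psi_R$ is linear for $\op = +$, which is exactly what activates part 2 of Theorem~\ref{thm:cont} and yields both directions of the equivalence via Definition~\ref{def:FDE}.
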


\begin{proof}[Proof of Corollary~\ref{cor:fde}]
	Set $\op = +$ in Theorem~\ref{thm:cont}.
\end{proof}

Similarly, the next corollary relates GFB with
continuous-time Markov chains~\cite{BuchholzOrdinaryExact} and probabilistic bisimulation of discrete-time Markov chains~\cite{Larsen19911}.

\begin{corollary}\label{cor:prob:bis}
	Let $(X,Q)$ be a continuous-time Markov chain with states $X$ and transition rate matrix $Q \in \RE^{X \times X}$. Consider the DS $D_{\dt} = (X,F)$ with $F(s) = s + \dt Q^T s$ where $\dt > 0$. Then,
	$D_{\dt}$ is an embedded discrete-time Markov chain of 
	$(X,Q)$ for sufficiently small $\dt > 0$. With this, for monoid $(\RE,+)$ the following three conditions are equivalent: 1. $R$ is a GFB of all $D_{\dt}$; 2. $R$ is an ordinary lumpability of $(X,Q)$;  3. $R$ is a probabilistic bisimulation of all $D_{\dt}$ that describe a discrete-time Markov chain.
\end{corollary}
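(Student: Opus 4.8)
The plan is to reduce all three conditions to a single, $\dt$-independent algebraic condition on the generator $Q$, the \emph{lumpability condition} $L(R)$: for every ordered pair of blocks $C',C \in \calX_R$ the row sum $Q(i,C) := \sum_{k \in C} Q_{ik}$ takes the same value for all $i \in C'$. Then (1)$\Leftrightarrow L(R)$, (2)$\Leftrightarrow L(R)$ and (3)$\Leftrightarrow L(R)$ give the triple equivalence by transitivity.

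First I would dispatch the embedding claim. Writing $P := I + \dt Q$, one checks directly that $\sum_k P_{ik} = 1 + \dt \sum_k Q_{ik} = 1$, since the rows of a generator sum to zero; moreover $P_{ik} = \dt Q_{ik} \geq 0$ for $i \neq k$, and $P_{ii} = 1 + \dt Q_{ii} \geq 0$ whenever $0 < \dt \leq 1/\max_i |Q_{ii}|$. Hence $P$ is stochastic, and $D_{\dt}$, whose update is $F(s) = (I + \dt Q^T) s = P^T s$, is exactly the (uniformized) discrete-time chain embedded in $(X,Q)$, proving the first assertion.

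Next I would show (2)$\Leftrightarrow L(R)$ and (3)$\Leftrightarrow L(R)$. The former is immediate, as $L(R)$ is precisely the defining condition of ordinary lumpability~\cite{BuchholzOrdinaryExact}. For the latter, probabilistic bisimulation of the DTMC with matrix $P$ requires $\sum_{k \in C} P_{ik} = \sum_{k \in C} P_{jk}$ for all $i,j$ in a common block $C'$ and all blocks $C$; substituting $P = I + \dt Q$ gives $[i \in C] + \dt\, Q(i,C) = [j \in C] + \dt\, Q(j,C)$, and since $i,j \in C'$ the indicator contributions coincide, so dividing by $\dt > 0$ leaves exactly $Q(i,C) = Q(j,C)$. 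Because $L(R)$ is independent of $\dt$, the clause ``all $D_{\dt}$ that describe a discrete-time Markov chain'' is consistent with the equivalence. For (1)$\Leftrightarrow L(R)$ I would invoke the binary characterization of Theorem~\ref{thm:bin:fe} with $\op = +$ and $0_\op = 0$. Since $\sum_{k \in C} f_{x_k}(s) = \sigma_C(s) + \dt \sum_l s_l\, Q(l,C)$ with $\sigma_C(s) = \sum_{k \in C} s_k$, the substitution $[x_i/0][x_j/(x_i + x_j)]$ leaves the $\sigma_C$-part unchanged (as $i,j$ lie in the same block) and alters the second part only by replacing the coefficient $Q(i,C)$ of $s_i$ with $Q(j,C)$. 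Thus $\Psi^{\calX_R}_{x_i,x_j}$ holds iff $Q(i,C) = Q(j,C)$ for all $C$, and ranging over all pairs $(x_i,x_j) \in R$ gives $L(R)$. Equivalently, by linearity of $F$ the GFB condition says $Q^T d$ lies in $\{d : \sigma_{C'}(d) = 0 \text{ for all } C'\}$ whenever $d$ does, and $\sigma_C(Q^T d) = \sum_j d_j\, Q(j,C)$ shows this holds iff $j \mapsto Q(j,C)$ is constant on each block.

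The main obstacle I anticipate is purely bookkeeping: keeping the transpose conventions straight (the forward Kolmogorov dynamics carries $Q^T$, whereas lumpability is phrased through row sums of $Q$) and confirming that the ``outgoing-rate-into-a-block'' quantity $Q(i,C)$ surfacing in the GFB computation is literally the one defining ordinary lumpability. Once the three reductions to $L(R)$ are verified, the equivalence of all three notions follows at once.
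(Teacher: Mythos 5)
Your proof is correct, but it takes a genuinely different route from the paper. The paper's proof is a two-step citation chain: it invokes Corollary~\ref{cor:fde} (itself a specialization of Theorem~\ref{thm:cont} to $\op=+$) to identify GFB of all $D_{\dt}$ with FDE of the forward Kolmogorov equations $\partial_t \pi(t) = Q^T\pi(t)$, then cites the known facts that FDE of these equations coincides with ordinary lumpability~\cite{DBLP:conf/popl/CardelliTTV16} and that ordinary lumpability coincides with probabilistic bisimulation of the embedded chain~\cite{BuchholzOrdinaryExact}. You instead bypass the continuous-time machinery entirely and reduce all three conditions to the single $\dt$-independent row-sum condition $L(R)$: condition (2) is $L(R)$ by definition, condition (3) reduces to it by substituting $P = I + \dt Q$ and cancelling the indicator terms, and condition (1) reduces to it by a direct computation of $\Psi^{\calX_R}_{x_i,x_j}$ via Theorem~\ref{thm:bin:fe}, exploiting that $\sum_{k \in C} f_{x_k}(s) = \sigma_C(s) + \dt \sum_l s_l\, Q(l,C)$ so the substitution $[x_i/0][x_j/(x_i+x_j)]$ perturbs only the coefficient of $s_i$. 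Your approach buys self-containedness and makes the common algebraic core of the three notions explicit; it also supplies an actual proof of the embedding claim (stochasticity of $I + \dt Q$ for small $\dt$), which the paper asserts without argument, and it works because everything here is linear, so none of the higher-order-term subtleties that Theorems~\ref{thm:cont} and~\ref{thm:cont:char} are built to handle can arise. What the paper's route buys is brevity and conceptual placement: it exhibits the corollary as an instance of the general GFB--exact-lumping correspondence rather than as a standalone matrix computation, at the cost of leaning on two external equivalences that your argument effectively reproves in this special case.
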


\begin{proof}[Proof of Corollary~\ref{cor:prob:bis}]
	The vector of transient probabilities of the Markov chain at time $t \geq 0$ satisfies the forward Kolmogorov equations $\partial_t \pi(t) = Q^T \pi(t)$. Moreover, by~\cite{DBLP:conf/popl/CardelliTTV16}, an equivalence relation $R$ over $X$ is an ordinary lumpability if and only if $R$ is an FDE the forward Kolmogorov equations. With this, Corollary~\ref{cor:fde} yields
	the equivalence of 1) and 2). The equivalence of 2) and 3), instead, is a well-known fact~\cite{BuchholzOrdinaryExact}.
\end{proof}

\begin{remark}
	The discussion shows that $\Psi^\calH_{x_i,x_j}$ in Algorithm~\ref{algorithm} can be decided in polynomial time for probabilistic bisimulation and FDE of polynomial differential equations~\cite{PNAScttv,DBLP:conf/popl/CardelliTTV16}, which can be in principle also extended to differential algebraic equations~\cite{DBLP:journals/nc/CardelliTT20}.
\end{remark}


\subsection{Attractors of continuous-time DS} The notion of attractor from Definition~\ref{def:attractor} also exists for continuous-time dynamics~\cite{AttractorsDiscretized}.

\begin{definition}[Attractor]
	Consider an ODE system $\partial_t v(t) = \Phi(v(t))$ with a differentiable vector field $\Phi : \RE^X \to \RE^X$. A compact nonempty set $A \subseteq \RE^X$ is an attractor (aka asymptotically stable) if there exists an open neighborhood $B$ of $A$ such that for any $\varepsilon > 0$ there is some time $t' \geq 0$ such that for any $v[0] \in B$, the solution of $\partial_t v(t) = \Phi(v(t))$ with $v(0) = v[0]$ satisfies $d(v(t),A) \leq \varepsilon$ for all $t \geq t'$. Here, $d(v(t),A) = \min_{a \in A} d(v(t),a)$ and distance $d$ is induced, similarly to $B$, by some norm.
\end{definition}


The next result from~\cite{AttractorsDiscretized} essentially ensures that attractors of an ODE system can be approximated by attractors of its discrete-time discretization.

\begin{theorem}[\cite{AttractorsDiscretized}]\label{thm:attractor:disc}
	Given $\partial_t v(t) = \Phi(v(t))$ with a differentiable vector field $\Phi : \RE^X \to \RE^X$, let $A \subseteq \RE^X$ be an attractor of $\partial_t v(t) = \Phi(v(t))$. Then, for any $\dt > 0$, there exists a set $A(\dt) \subseteq \RE^X$ such that
	\begin{itemize}
		\item $F(A(\dt)) \subseteq A(\dt)$, where $F(s) = s + \dt \Phi(s)$ and;
		\item The sets $A(\dt)$ converge to the set $A$ in the Hausdorff metric as $\dt \to 0$.
	\end{itemize}
\end{theorem}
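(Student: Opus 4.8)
The statement is quoted from \cite{AttractorsDiscretized}, so I would reconstruct the standard numerical-analysis argument, the essence of which is to realize the sets $A(\dt)$ as \emph{sublevel sets of a Lyapunov function}. This choice makes both requested properties transparent and, crucially, sidesteps the delicate half of attractor convergence. First I would reduce to a Lyapunov function: by hypothesis $A$ is a compact, invariant, asymptotically stable set for the flow of $\partial_t v = \Phi(v)$, so a (smooth) converse Lyapunov theorem yields a $C^2$ function $V : B \to \RE_{\geq 0}$ with $V^{-1}(0) = A$, whose sublevel sets $N_c := \{ x : V(x) \leq c \}$ are, for $0 < c \leq c_0$, compact neighborhoods of $A$ inside the basin $B$, and whose Lie derivative $\dot V(x) = \nabla V(x) \cdot \Phi(x)$ is strictly negative on $N_{c_0} \setminus A$. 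Two structural facts then do all the work: $A \subseteq N_c$ for every $c$ (since $V|_A = 0$), and $N_c \searrow A$ as $c \to 0^+$, so that $\sup_{x \in N_c} d(x,A) \to 0$ by compactness and nestedness.

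Next I would establish forward invariance of small sublevel sets under the Euler map $F(s) = s + \dt\,\Phi(s)$. For $s \in N_{c_0}$, Taylor's theorem gives $V(F(s)) = V(s) + \dt\,\dot V(s) + \calO(\dt^2)$, with the $\calO(\dt^2)$ term bounded uniformly on the compact set $N_{c_0}$ by $\tfrac12 \dt^2 \sup_{N_{c_0}} \|D^2 V\|\,\|\Phi\|^2$. On each level set $\{V = c\}$ the quantity $-\dot V$ attains a positive minimum $\delta(c)$; hence for $\dt$ small relative to $\delta(c)$ and the second-order bound one obtains $V(F(s)) \leq c$ for all $s \in N_c$, i.e.\ $F(N_c) \subseteq N_c$. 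Made quantitative, this yields a threshold $\dt_c > 0$ with $N_c$ being $F$-invariant for all $0 < \dt < \dt_c$; tracking how $\delta(c)$ and the curvature bound behave as $c \to 0$ then produces a level $c(\dt)$ with $c(\dt) \to 0$ as $\dt \to 0$ and $F(N_{c(\dt)}) \subseteq N_{c(\dt)}$.

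Setting $A(\dt) := N_{c(\dt)}$ concludes the argument: the first bullet is exactly the forward invariance just established, and since $A \subseteq A(\dt)$ the Hausdorff distance reduces to $\sup_{x \in A(\dt)} d(x,A) = \sup_{x \in N_{c(\dt)}} d(x,A)$, which tends to $0$ because $c(\dt) \to 0$ and $N_c \searrow A$; hence $A(\dt) \to A$ in the Hausdorff metric. The main obstacle is the second step: guaranteeing that the invariant level $c(\dt)$ can be driven to $0$ as $\dt \to 0$. This hinges on uniform control of the second-order Euler error against the Lyapunov decrease rate $\delta(c)$ near $A$ — that is, on quantitative estimates of $\nabla V$, $D^2 V$ and $\Phi$ on shrinking neighborhoods of $A$ — and it is precisely here that the regularity of $\Phi$ and the smoothness of the converse Lyapunov function enter. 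I would deliberately take sublevel sets rather than the maximal $F$-invariant set $\bigcap_{n \geq 0} F^n(N_{c_0})$, because the inclusion $A \subseteq N_{c(\dt)}$ then holds automatically, avoiding the notoriously subtle lower-semicontinuity (``no collapse'') half of attractor convergence.
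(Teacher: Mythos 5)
The paper contains no proof of this statement: Theorem~\ref{thm:attractor:disc} is quoted from~\cite{AttractorsDiscretized} and used as a black box (in combination with Corollary~\ref{cor:attractor}) to transfer attractor information from an ODE to its Euler discretization. So there is no internal proof to compare against. What can be said is that your reconstruction — converse Lyapunov function for the asymptotically stable compact set, sublevel sets $N_c$ as candidate invariant sets, Euler error absorbed by the Lyapunov decrease, then a level $c(\dt)\to 0$ — is precisely the classical argument in the numerical-dynamics literature from which this theorem is taken, and it is essentially sound, including your correct identification of the delicate step (driving the invariant level to $0$ as $\dt\to 0$) and your correct observation that taking sublevel sets makes the lower half of Hausdorff convergence automatic since $A \subseteq A(\dt)$.

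Two points need tightening. First, taking the minimum of $-\dot V$ only over the \emph{level set} $\{V=c\}$ does not suffice for $F(N_c)\subseteq N_c$: interior points of $N_c$ must also be prevented from jumping out, and near $A$ their decrease rate $-\dot V$ vanishes. The standard repair is to split $N_c$ into the deep part $\{V\leq c/2\}$, where $V(F(s)) \leq c/2 + \dt \sup|\dot V| + K\dt^2 \leq c$ once $\dt$ is small, and the compact shell $\{c/2 \leq V \leq c\}$, which avoids $A$ and on which $-\dot V$ has a positive minimum; your $\delta(c)$ should be this shell minimum. Second, your construction yields $A(\dt)$ only for $\dt$ below a threshold, whereas the statement asks for a set for \emph{every} $\dt>0$; this is trivial to fix (for large $\dt$ take $A(\dt)=\RE^X$, which is $F$-invariant, since convergence is only asserted as $\dt\to 0$), but it should be stated. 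With these repairs, a staircase choice of levels (fix $c_n \downarrow 0$ with invariance thresholds $\dt_n$, and set $A(\dt)=N_{c_n}$ for $\dt\in[\dt_{n+1},\dt_n)$) together with the nested-compact-sets argument gives the Hausdorff convergence exactly as you describe.
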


Corollary~\ref{cor:attractor} and Theorem~\ref{thm:attractor:disc} allow to use GFB to argue on attractors of ODEs. Less importantly,  Theorem~\ref{thm:attractor:disc} does not explicitly provide basins of attraction for the sets $A(\dt)$. However, $A(\dt)$ are attractors when the discrete topology is used. 

\section{Applications}\label{sec:applications}
\begin{figure}[t]
	\centering
	\includegraphics[width=0.7\linewidth]{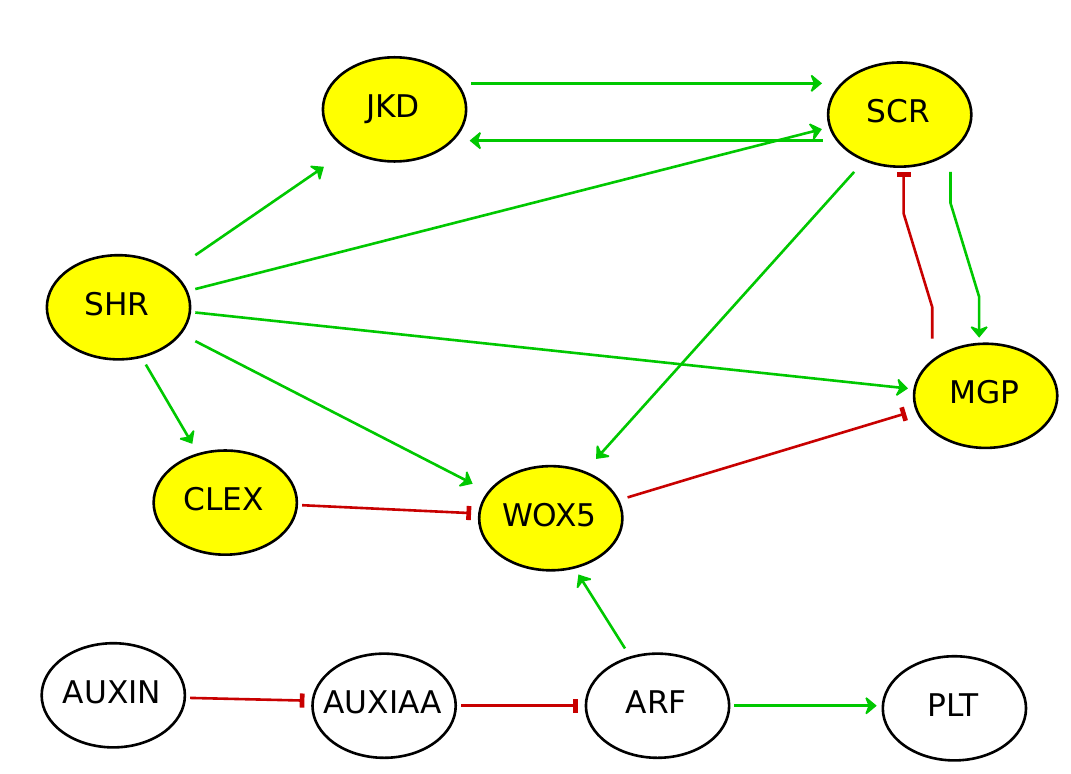}    
	\caption{
		Pictorial representation of the Boolean network from Example~\ref{ex:1} using GinSim~\cite{naldi2009logical}, adapted from~\cite{azpeitia2010single}.
	}
	\label{fig:celldif_DrosophilaBN}
\end{figure}

\subsection{Regulatory Networks}\label{sec:selCaseStudies}
We now apply GFB to Boolean and multi-valued networks from the literature.

\paragraph{\textbf{BN case study}} We study  the BN used as running example (Example~\ref{ex:1}). 
To ease interpretation, 
Fig.~\ref{fig:celldif_DrosophilaBN} uses the typical graphical notation of \emph{influence graphs} (offered, e.g.,  by GinSIM~\cite{naldi2009logical}). 
Nodes denote variables, while arrows denote \emph{influences} among nodes. Influences come from the update functions: green and red arrows denote, respectively, positive  (\emph{promotion}) and negative  (\emph{inhibition}) influence. 
In Example~\ref{ex:1},  $\arf$ promotes $\plt$ due to term $\arf$ in $f_\plt$, while $\auxin$ inhibits $\auxiaa$ due to term $\neg \auxin$ in $f_\auxiaa$.
The BN 
consists of two connected pathways: one for the transcription factor $\shr$ with its signalling to the other variables of the pathway (we highlight in yellow the involved nodes), and one  involving the hormone $\auxin$ and its signaling to the plethora ($\plt$) genes. 

BN variables can be categorized into three groups~\cite{naldi2012efficient}: \emph{inputs} ($\var{SHR}$, and $\var{AUXIN}$) that do not have incoming edges, \emph{outputs} ($\var{PLT}$) that do not have outgoing edges, and the remaining \emph{internal nodes}. 
The distinction is obvious from update functions: inputs have constant update functions, while outputs do not appear in the update function of other variables. Inputs are often set by the modeler to perform  \emph{what if} experiments, whereas outputs permit to observe the response dynamics of the model. In this BN, each input \emph{controls} its own pathway, meaning that the modeller can decide  to enable them via appropriate initial states.


Considering the GFB $\calX_R$ from Example~\ref{ex:2} for $\op = \land$, the only non-trivial block $C = \{ \scr, \shr, \jkd, \mgp, \wox, \clex \}$ 
corresponds to the yellow nodes in 
Fig.~\ref{fig:celldif_DrosophilaBN}.
This GFB is computed using 
the initial partition with two blocks separating outputs and non-output nodes. 
%
%
%
%
Considering the reduced model for $\calX_R$ from  Example~\ref{ex:3}, all yellow nodes in 
Fig.~\ref{fig:celldif_DrosophilaBN}
get collapsed into one, meaning that the $\shr$ pathway is abstracted away. 
In other words, in this example GFB has automatically identified and \emph{simplified} a pathway in the model, offering a coarser representation of the system focusing on the $\auxin$ pathway only.

\begin{figure}[t]
	\vspace{-1.3cm}
	\centering
	\includegraphics[width=0.88\linewidth]{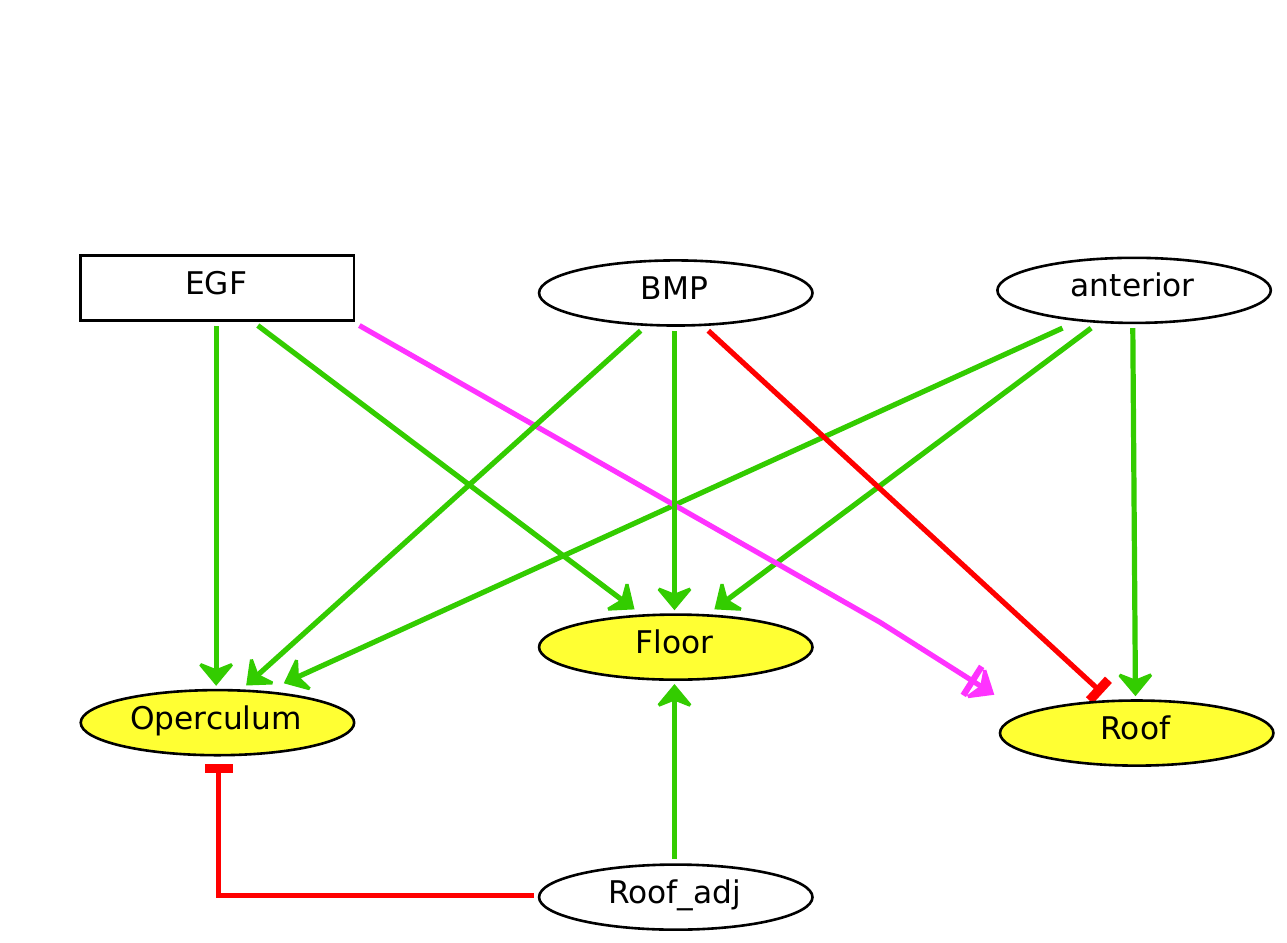}
	\caption{
		Pictorial representation using GinSim~\cite{naldi2009logical} adapted from~\cite{Drosophila} of the model on eggshell formation for drosophila melanogaster flies.}
	\label{fig:celldif_DrosophilaMV}
\end{figure}

\paragraph{\textbf{Multi-valued network case study}} We apply GFB to a multi-valued regulatory network (MV) from~\cite{Drosophila}.
Intuitively,  an MV is a BN where variables can admit more than two values. This is a single-cell model describing the development of eggshell structures in drosophila melanogaster flies. 
The MV has 7 variables with relations depicted in Fig.~\ref{fig:celldif_DrosophilaMV} and 
update functions:

\begin{align*}
	f_\var{EGF} & = \var{EGF} \\
	f_\var{BMP} & = \var{BMP} \nonumber\\
	f_\var{Ant} & = \var{Ant} \nonumber\\
	f_\var{RoofAdj} & = \var{RoofAdj}\nonumber\\
	f_\var{Roof} & = \var{Ant}\!:\!\!1 \land \var{EGF}\!:\!\!1 \land \var{BMP}\!:\!\!0 \\
	f_\var{Floor} & = \var{Ant}\!:\!\!1 \land (\var{EGF}\!:\!2 \lor (\var{EGF}\!:\!\!1 \land \var{BMP}\!:\!\!1 ) ) \land \var{RoofAdj}\!:\!\!1 \nonumber\\
	f_\var{Operc} & = \var{Ant}\!:\!\!1 \land (\var{EGF}\!:\!2 \lor (\var{EGF}\!:\!\!1 \land \var{BMP}\!:\!\!1 ) ) \land \var{RoofAdj}\!:\!0 \nonumber
\end{align*}
Using the notation in~\cite{Drosophila}, ``$\var{var}\!:\!v$'' stands for \emph{variable $\var{var}$ has value $v$}. This is a Boolean predicate evaluating to 1 if $\var{var}$ has value $v$, and $0$ otherwise.  
Variable $\var{EGF}$, the rectangular node in 
Fig.~\ref{fig:celldif_DrosophilaMV},
can take values 0, 1, 2, denoting absent/intermediate/high  activation levels. All other variables 
are Boolean (0/1).\footnote{
Our framework requires all variables to have same domain $\MM$.  In order to support MV, we implicitly \emph{expand} the domain of all variables to the largest one (e.g., $\{0,1,2\}$ of $\var{EGF}$). This does not change the models' dynamics, in the sense that when setting initial states fitting in the original domain we will remain within the original domain.
}

Differently from Fig.~\ref{fig:celldif_DrosophilaBN},
variables divide in two groups only: the \emph{inputs} $\var{EGF}$, $\var{BMP}$, $\var{Ant}$, and $\var{RoofAdj}$, and the 
\emph{outputs} $\var{Operc}$, $\var{Floor}$, and $\var{Roof}$. 
We also have a third edge type, the purple one from $\var{EGF}$ to $\var{Roof}$. This visually stresses that $\var{EGF}$ influences $\var{Roof}$ only when in intermediate level and not when in high level. 

\begin{figure*}[t]
	\centering
	\vspace{-0.3cm}
	\includegraphics[width=0.5\linewidth]{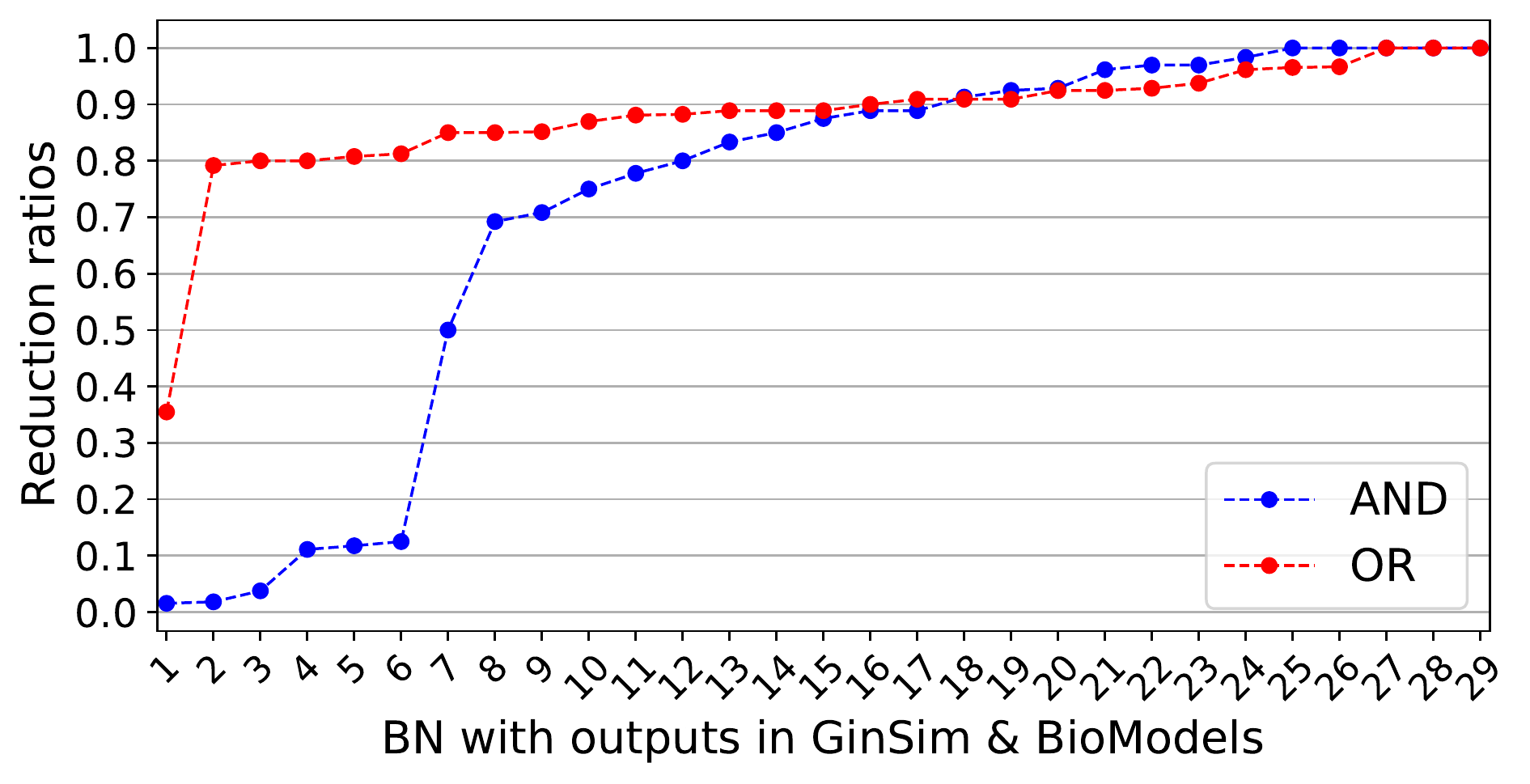}\hspace{-0.2cm}
	\includegraphics[width=0.5\linewidth]{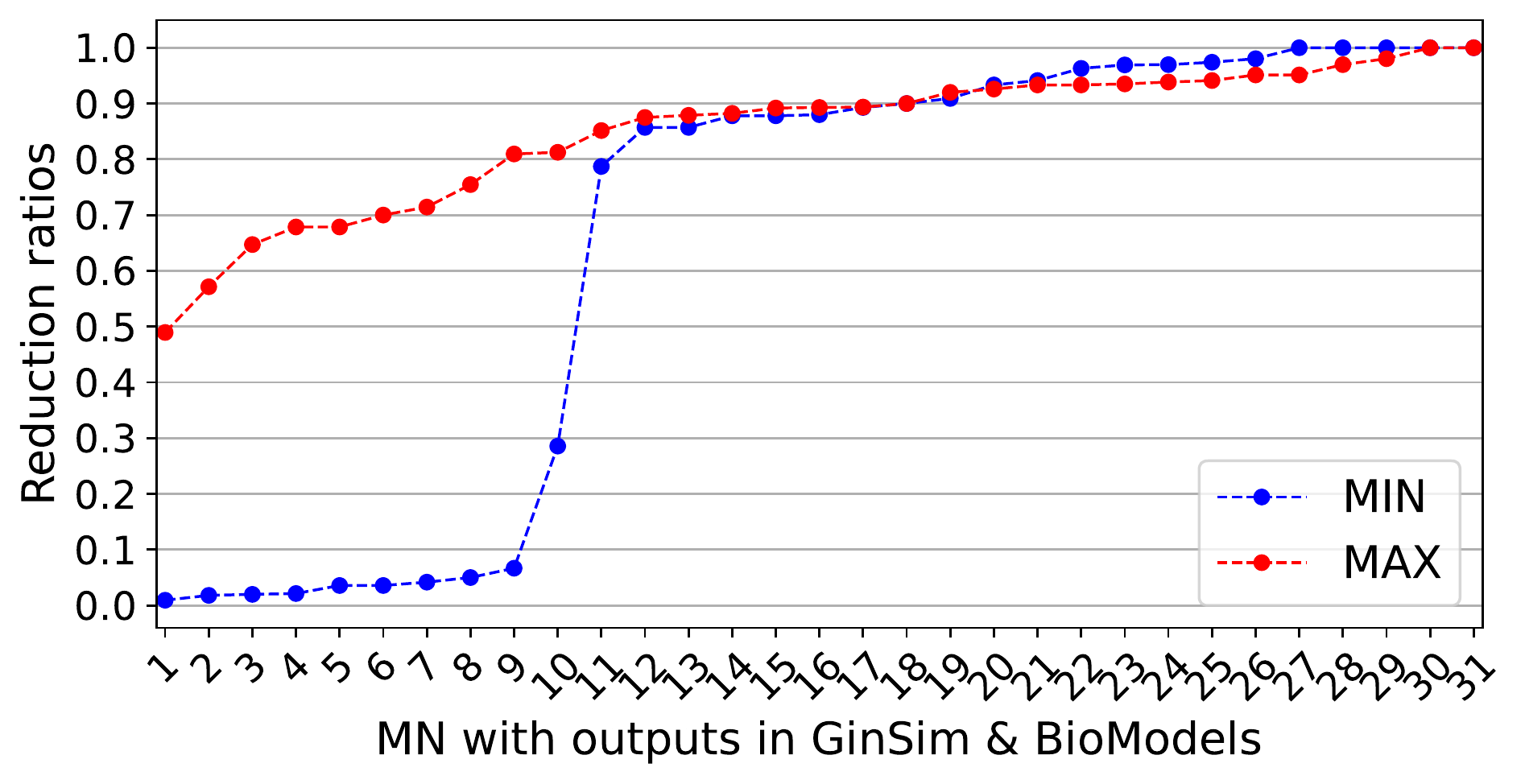}\\
	\caption{
		\textbf{(Left)}
		Reduction ratios (reduced variables over original ones) in ascending order for the 29 BN with outputs from GINsim and BioModelsDB for  $\oplus\in\{\land,\lor\}$ and initial partitions with two blocks separating output and non-outputs.
		\textbf{(Right)}
		Same as (Left) for the 31 MV with outputs from the two repositories using  $\oplus\in\{\min,\max\}$.
		%
	}
	\label{BNcharts_MNcharts_attr_WN_redratio}
\end{figure*}

\begin{figure}[t]
	\centering
	\includegraphics[width=1\linewidth]{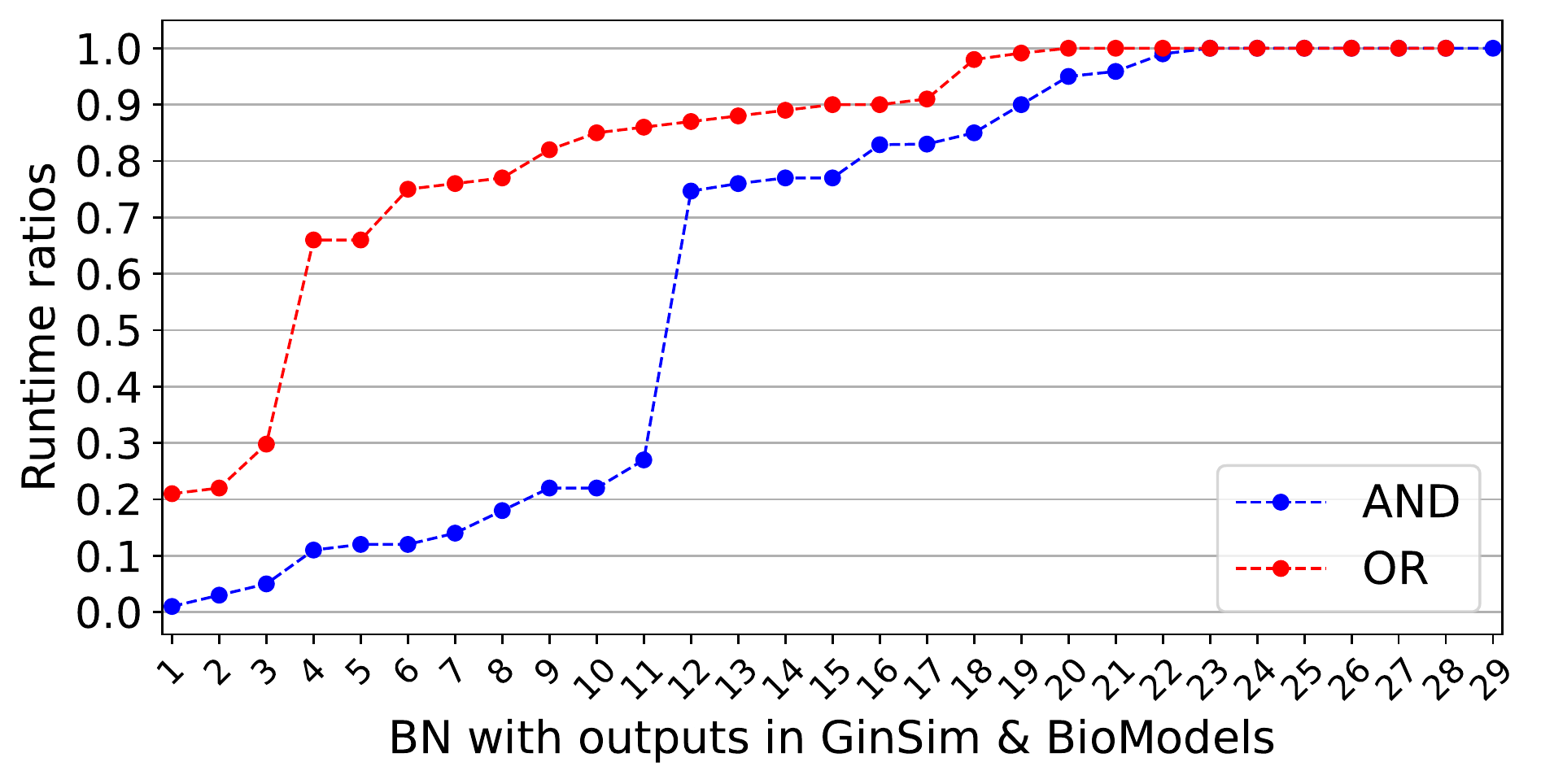}	
	\vspace{-0.6cm}
	\caption{
		Runtime ratios in ascending order for  computation of attractors for the 29 BNs from Fig.~\ref{BNcharts_MNcharts_attr_WN_redratio} (Left) and their reductions.
	}
	\label{BNcharts_MNcharts_attr_WN_runtime}
\end{figure}

The MV relates three  follicle cell fates, the outputs, 
to combinations of values of the  inputs. 
EGF and BMP are known 
 pathways responsible for patterning of the drosophila eggshell~\cite{Drosophila}. This is encoded in the model because $\var{EGF}$ and $\var{BMP}$ influence, in different ways, all outputs. 
Finally, $\var{Ant}$ 
models the anterior competence region, therefore it is required by all outputs, while $\var{RoofAdj}$ 
accounts for the state of neighboring cells by promoting $\var{Floor}$ and inhibiting $\var{Operc}$ (\emph{operculum}).

The partition with one block for all outputs 
and singleton blocks for each input is a GFB for $\oplus\in\{\max,\min\}$. By Definition~\ref{def:red}, we get two different reduced models in the two cases, enabling complementary studies.
Case $\max$ allows \emph{full output deactivation} studies, meaning that the reduced variable for the outputs gets value 0 only if all outputs have value 0. Instead,  case $\min$ allows \emph{full output activation} studies, as the reduced variable gets value 1 only when all outputs have value 1.
By naming $\var{outputs}$ the reduced variable corresponding to the block of outputs, by applying Definition~\ref{def:red} and some algebraic simplification we get:
\begin{align*}
	f_\var{outputs}  &= \var{Ant}\!:\!\!1 \land (\var{EGF}\!:\!\!1 \lor \var{EGF}\!:\!\!2),  \ \ & \text{for $\oplus=\max$,} \\
	f_\var{outputs}  &= 0, & \text{for $\oplus= \min$,}
\end{align*}
while the update functions of the input variables remain unchanged.
From this we get that: despite the three outputs have different dependencies on $\var{Ant}$, $\var{BMP}$, $\var{RoofAdj}$, and on different values of $\var{EGF}$, in the $\oplus=\max$ case it is enough to consider only $\var{ANT}$ and $\var{EGF}$ to answer questions related to full output deactivation. Furthermore, it is not necessary anymore to use three values for $\var{EGF}$, as we are only interested in the cases in which it is $0$ or positive $(\var{EGF}\!:\!\!1 \lor \var{EGF}\!:\!\!2)$.
Instead, from the $\oplus=\min$ case we know that the original model never expresses cases of full activation, i.e., it never happens that the three outputs have all value 1. Indeed, 
by studying 
the update functions of the original outputs, we see that there are no 
values for the involved variables that makes all of them true.





\paragraph{\textbf{Large-scale validation of GFB on regulatory networks}} We present a large-scale validation of GFB on the BNs and MVs from the repositories GinSim~(\url{ginsim.org/models\_repository}) and BioModelsDB~\cite{BioModels2020}.~\footnote{BioModelsDB contains both BN and ODE models. Here we focus on BNs, while ODEs were considered in~\cite{cmsb2019tcs} for ODE-based reduction techniques.} We validate GFB in terms of aggregation power and of speed-up offered for attractor analysis.

\emph{Experimental setting.} 
We created a prototype implementation of GFB integrated with the SMT solver Z3~\cite{de2008z3} to check formulas from $\Psi^\calH_{x_i,x_j}$ in Algorithm~\ref{algorithm} in the tool for model reduction ERODE~\cite{cardelli2017erode}, which has been recently extended with support for BNs~\cite{DBLP:conf/cmsb/ArgyrisLTTV22}. 
%
In doing so, we added to ERODE 
an importer for SBML Qual~\cite{chaouiya2013sbml}, an XML format 
supported by both repositories, allowing us to import all 43 
BNs and
50 
MVs. 
In order to obtain physically-relevant initial partitions, 
we infer \emph{candidate outputs}, variables not appearing in the update function of other variables. For each model, we create \emph{output-preserving} initial partitions: these consist of two blocks, one containing all outputs and one containing the remaining variables.   This guarantees that reduced models allow, e.g., for  full output (de)activation studies discussed before.
In order to perform a consistent 
treatment, we restricted our analysis on the 29 BNs and 31 MVs with at least one candidate output. 

\emph{Validation of aggregation power.} Fig.~\ref{BNcharts_MNcharts_attr_WN_redratio} (Left) provides the reduction ratios obtained for the BNs  using $\oplus\in\{\land, \lor\}$. For each model we plot the reduction ratio, defined as the number of reduced variables over that  of original ones. 
For each operator $\oplus$, the ratios were sorted in ascending order.
We can see that $\oplus=\land$ has high aggregation power, with about one fourth of the models having reduction ratio below 0.6, while for $\oplus=\lor$ most of the models have 0.8 or more.
For $\oplus=\land$, some models have particularly low ratios, below 0.2, some of which due to the fact that the reduced model has 2 variables only. We remark that these shall not be considered  \emph{degenerate} reductions, because of the used initial partitions, as discussed.  
We do not present results on maximal reductions, 
obtained with the initial partition with one block only. These are significantly smaller, but some are degenerate with one variable only.
We leave for future work a detailed study on finer intermediate reductions using model-specific initial partitions preserving 
variables of interest for the modeler. For example, a modeler could be interested in preserving only some outputs. Fig.~\ref{BNcharts_MNcharts_attr_WN_redratio} (Right) presents a similar study performed on the MVs using $\oplus=\min$ and $\oplus\!=\!\max$, confirming the aggregation power of GFB.

\emph{Validation of analysis speed-up.}
Corollary~\ref{cor:attractor} ensures that GFB  maps all attractors of the original system to attractors of the reduced one. Here we show that
this can speed-up attractor computation. We use the COLOMOTO Notebook~\cite{naldi2015cooperative}, an environment  incorporating a variety of tools for BN analysis. An example is
BNS~\cite{dubrova2011sat}, which combines SAT-solving and bounded model checking to identify attractors. We computed the attractors of the 29 considered BNs and of their reductions. We could not consider MVs because we are not aware of tools for general attractor analysis for MVs. 
Fig.~\ref{BNcharts_MNcharts_attr_WN_runtime} shows the obtained runtime ratios (computation time of attractors in the reduced model over that in the original one).
%
%
In several cases the reduction led to significant analysis speed-ups: in 11 BNs the ratio is less than 0.3. We remark that GFB is \emph{useful}, because the analysis of the original BNs, the AND- and OR-reductions 
took on average 100s, 30s and 60s, respectively. Notably, reductions with low reduction ratios are particularly fast (fewer algorithm iterations): the 6 AND-reductions in Fig.~\ref{BNcharts_MNcharts_attr_WN_redratio} (Left) with ratio smaller than 0.3 take less than 1.5 seconds on average. 

\begin{table}[b]
	\centering
	\begin{tabular}{c c c c }
		\toprule
		\emph{Model} &  \emph{Variables} & \multicolumn{2}{c}{\emph{Attractors analysis}}  \\
		&   & \emph{Count} & \emph{Runtime(s)}  
		\\		
		\midrule
		\emph{Original}& 321	&  \multicolumn{2}{c}{---\emph{Time Out}---} 
		\\
		\midrule
		\emph{Output separated}& 189	&  \multicolumn{2}{c}{---\emph{Time Out}---} 
		\\
		\emph{O1} &70	&64	&0.668 
		\\
		\emph{O2} &33	&64	&0.325 
		\\
		\emph{Maximal} &1	&1	&0.001 
		\\
		\bottomrule
	\end{tabular}
\vspace{0.2cm}
	\caption{GFB enables attractors computation on  large BN~\cite{raza2008logic}.}
	\label{res}
\end{table}

\paragraph{\textbf{Enabling analysis of large BNs using GFB}}
We now apply GFB to a large BN of signalling pathways central to macrophage activation~\cite{raza2008logic}. This BN contains $\mathit{321}$ variables, 
making attractor computation infeasible even using the most efficient tool for this task~\cite{dubrova2011sat}. In particular, the analysis does not terminate within an arbitrarily chosen time limit of 10 hours. 
Our crucial hypothesis is that \emph{GFB can enable some analysis} of this otherwise not analyzable BN, although with certain restrictions imposed by what is exactly preserved by the reduction. 

The results are presented in Table~\ref{res}. In this experiment we focus on $\oplus=\wedge$.
We can see that the maximal reduction is not physically-relevant, as it reduces to 1 variable only.
The output-preserving reduction, instead, leads to a reduced model with 189 variables. 
Despite this, the obtained reduced model is still not analyzable within the chosen time limit. 
We now show how two alternative initial partitions lead to reduced models that can be effectively analyzed. 
In particular, we assume that the modeler is not interested in preserving all 68 outputs, but two different subsets of them: $O_1=\{\var{S\_28}, \var{S\_26}, \var{S\_198}, \var{S\_11}\}$ and $O_2=\{\var{S\_184}, \var{S\_188}\}$. In both cases, we use an initial partition with one block for the selected outputs, and one for all the other variables.
In these two cases, we obtained models with 70 and 33 variables, respectively, which admit analysis. In particular, the obtained reduced models can now be analyzed using less than a second.

\subsection{Non-linear reductions of Differential and Difference Equations}
We present examples of  exact lumping where $\psi_R$ is not linear, and thus cannot be captured by linear lumpings  such FDE. 
We use  $(\RE,\cdot)$ with  neutral element $1$.

\paragraph{\textbf{Nonlinear Reduction of a Lotka-Volterra Model over $(\RE,\cdot)$}}\label{sec:lotka}
%
We start considering a prototypical 
higher-order Lotka-Volterra model~\cite{PredatorPreyHO} where  $x_1$ preys $x_2$ and $x_3$, while $x_2$ and $x_3$ prey together $x_1$. The corresponding ODE system is
\begin{equation}
\begin{split}
	\label{eq:predator}
	\partial_t v_{x_1}  &= v_{x_1} (1 - v_{x_2} v_{x_3}), \\ 
	\partial_t v_{x_2}  & = v_{x_2} (1 - v_{x_1}), \\
	\partial_t v_{x_3} & = v_{x_3} (1 - v_{x_1}).
\end{split}
\end{equation}
The ODE discretization of~(\ref{eq:predator}) is given by
\begin{align*}
 f_{x_1}(s) &= s_{x_1} + \dt s_{x_1} (1 \!-\! s_{x_2} s_{x_3}), \\
f_{x_2}(s) & = s_{x_2} + \dt s_{x_2} (1 \!-\! s_{x_1}), \\
f_{x_3}(s)  &= s_{x_3} + \dt s_{x_3} (1 \!-\! s_{x_1}).
\end{align*}

By Theorem~\ref{thm:cont}, the \emph{nonlinear} function $\psi_R(v_{x_1},v_{x_2},v_{x_3}) = (v_{x_1}, v_{x_2} \cdot v_{x_3})$ is an exact lumping of~(\ref{eq:predator}). Indeed, $\calX_R \!=\! \{ \{x_1\}, \{x_2,x_3\} \}$ is a GFB of~(\ref{eq:predator}) for $\op=\cdot$ because $\Psi^{\calX_R}_{x_2,x_3}$ is  valid thanks to the identities 
$f_{x_1} = f_{x_1}[x_2 / 1, x_3 / x_2 x_3]$, and
\begin{align*}
	f_{x_2} \cdot f_{x_3} & = (x_2 + \dt x_2 (1 - x_1)) \cdot (x_3 + \dt x_3 (1 - x_1)) \nonumber \\
	& = x_2 x_3 + 2 \dt x_2 x_3 (1 - x_1) + \dt^2 x_2 x_3 (1 - x_1)^2\\
	& =  (f_{x_2} \cdot f_{x_3})[x_2 / 1, x_3 / x_2 x_3] . \nonumber
\end{align*}
The lumped ODE system is given by $\partial_t v_{x_1} = v_{x_1} (1 - v_{x_2} v_{x_3})$ and
\begin{align*}
\partial_t (v_{x_2} v_{x_3}) &= \partial_t v_{x_2} \cdot v_{x_3} + v_{x_2} \cdot \partial_t v_{x_3} \\
&= v_{x_2} (1 - v_{x_1}) v_{x_3} + v_{x_2} v_{x_3} (1 - v_{x_1}) \\
&= 2  v_{x_1} v_{x_2} (1 - v_{x_1}).
\end{align*}

\begin{figure}[t]
	\centering
	\includegraphics[width=0.98\linewidth]{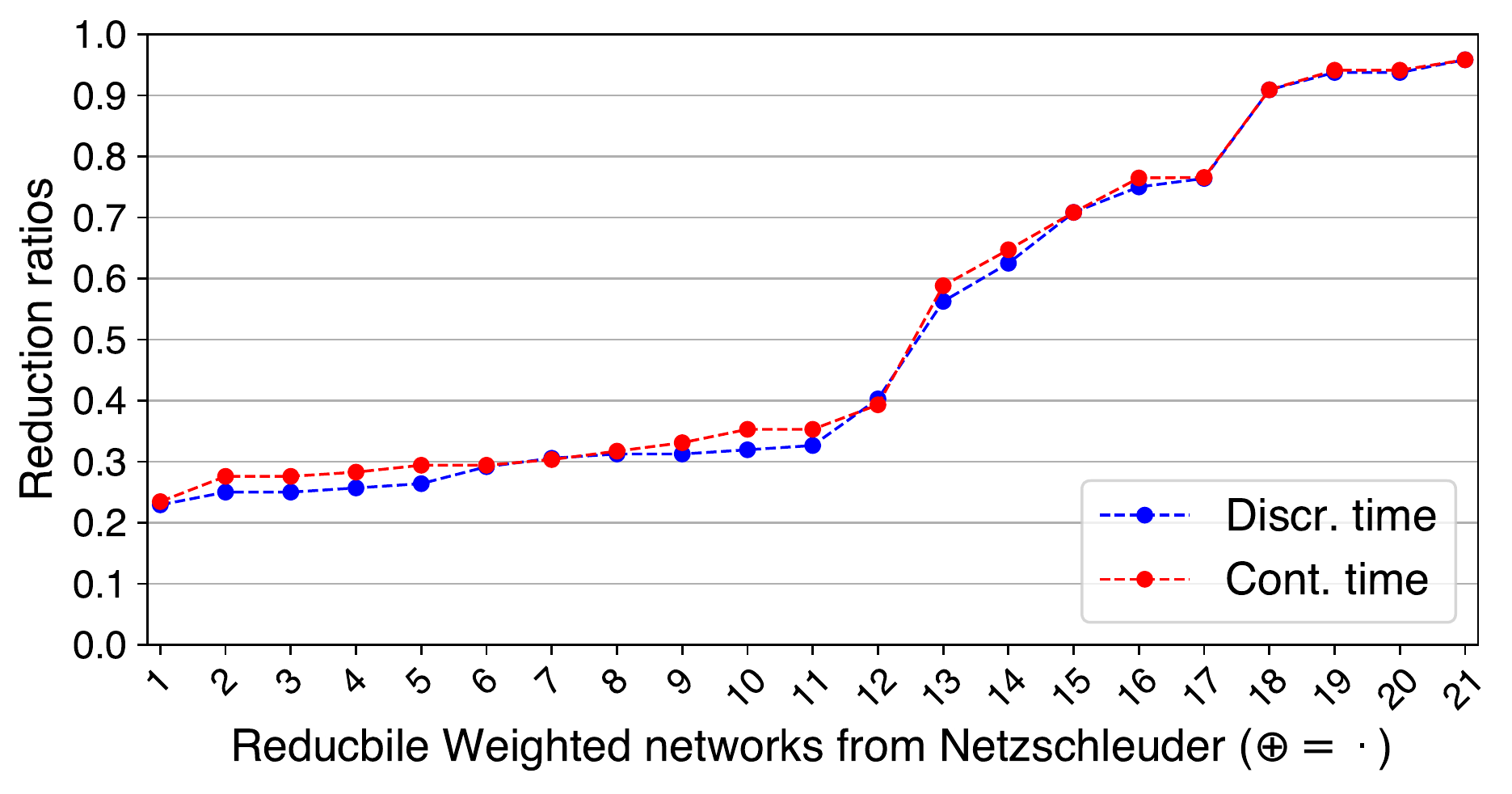}	
 \vspace{-0.2cm}
	\caption{
		Similarly to Fig.~\ref{BNcharts_MNcharts_attr_WN_redratio}, we plot reduction ratios for the discrete- and continuous-time dynamical interpretations of 21 weighted networks from Netzschleuder. We used one operator, $\oplus=\cdot$, and initial partitions separating the first node in the network from the others. 
	}
	\label{BNcharts_MNcharts_attr_WN_Network}
\end{figure}	

\paragraph{\textbf{Nonlinear Reduction of Dynamical Weighted Networks over $(\RE,\cdot)$}}
We now consider real-valued DS obtained from weighted networks from the  Netzschleuder repository~\cite{Netzschleuder}. 
We considered all 72 weighted networks with at most 200 nodes  (by restricting to at most the first 15 models from each family of models). The undirected ones were expanded in directed by replacing every undirected edge with two corresponding directed ones with same weight.

We consider two different dynamical interpretations. For $A$ the adjacency matrix of a network, we study the discrete-time DS $x(t+1)=Ax(t)$, and the (ODE discretization of the) continuous-time DS $\partial_t v(t)=Av(t)$. 
In both cases, we use one variable per node.\footnote{In the continuous-time case, we also have an additional variable for the $\tau$ term from the ODE discretization, to which we give constant update function. This guarantees that the obtained reductions hold for any value of $\tau$.} 
%
We use $\oplus=\cdot$, obtaining nonlinear reductions. 
This leads to high nonlinearities in formulas $\Psi^{\calX_R}_{x_i,x_j}$ from Theorem~\ref{thm:bin:fe}, complex to handle for Z3. Indeed,  Algorithm~\ref{algorithm} failed to terminate within an arbitrarily chosen time-out of 1 hour even for models of moderate size. 
Hence, for our experiments we used the randomized version of the algorithm discussed in Section~\ref{sec:comp}, performing 40 tests per formula $\Psi^{\calX_R}_{x_i,x_j}$ from Theorem~\ref{thm:bin:fe} after sampling  values for all variables. Currently, our prototype is still based on Z3, to which we provide  the sampled values making all $\Psi^{\calX_R}_{x_i,x_j}$ formulas variable free. In this setting, Z3 never failed.

Fig.~\ref{BNcharts_MNcharts_attr_WN_Network} provides the results for the 21 networks that admitted a reduction, 29\% of the 72 considered. 
We got similar reduction ratios in the two interpretations, with slightly better ones for the discrete-time one.
The lower reduction power of the 
continuous-time case comes from two factors: (i) Models have higher nonlinearities due to the $\tau$ term; (ii) Theorem~\ref{thm:cont} gives only a necessary condition for aggregation in this case (our  prototype does not support the results of Theorem~\ref{thm:cont:char}).
The largest runtimes for the continuous- and discrete-time cases were  about 500 and 400 seconds, respectively, for a model with 145 nodes.

\section{Conclusion}\label{sec:conc}

Generalized forward bisimulation (GFB) is a technique for dimensionality reduction of discrete- and continuous-time dynamical systems that captures and generalizes existing techniques. 
GFB allows to compute nonlinear reductions. One needs to specify a dynamical system,  
a commutative monoid (the variables' domain and an operation used to aggregate them), and an initial partition of the variables (used to tune the reduction power to preserve variables of interest). 
A partition refinement algorithm then minimizes the system over the operation of the monoid. We implemented GFB 
and applied it to four popular formalisms: difference and differential equations with monoid $(\RE,\cdot)$, Boolean networks with $(\BB,\land)$ and $(\BB,\lor)$, multi-valued networks with $(\{0,1,2\},\min)$ and $(\{0,1,2\},\max)$. In all cases, GFB yielded notable nonlinear reductions. On 60 Boolean and multi-valued networks from two popular repositories, we have shown high aggregation power  and analysis speed-ups. Using an existing large Boolean network with 321 variables we have shown that GFB might enable the analysis of otherwise untractable models. On 21 ODEs originated from weighted networks from a popular repository, we have computed nonlinear reductions thanks to the $\cdot$ operation, showing high aggregation power. Future work will study the reduction of optimization problems from systems biology~\cite{DBLP:journals/tac/WhitbyCKLTT22}, performance engineering~\cite{DBLP:journals/tcad/0001MTA16} and AI~\cite{DBLP:journals/corr/abs-2106-13898}.

\emph{\textbf{Acknowledgments.}} The work was partially supported by the DFF project REDUCTO 9040-00224B, the Poul Due Jensen Grant 883901, the Villum Investigator Grant S4OS, and the PRIN project SEDUCE 2017TWRCNB.

\bibliography{ms}


\end{document}